\newtheorem{prop}{Proposition}
\renewenvironment{abstract}{%
  \noindent\textbf\abstractname .\hspace{1pt}
}{
  \endlist \par\bigskip\bigskip
}
\renewcommand*{\backref}[1]{} 
\renewcommand*{\backrefalt}[4]{
    \ifcase #1
       No referred.
    \or
       \emph{Referred to on page #2.}
    \else
       \emph{Referred to on pages #2.}
    \fi}
\begin{document}
\begin{center}
{\LARGE\bf Semi-Supervised Non-Parametric Bayesian Modelling of Spatial Proteomics}
\end{center}
\medskip
\begin{center}
{\large Oliver M. Crook$^{1,2}$, Kathryn S. Lilley$^{1}$, Laurent Gatto$^{3,\ast}$ and Paul D. W. Kirk$^{2,\ast}$ \\[15pt]
\emph{$^{1}$Cambridge Centre for Proteomics, Department of Biochemistry, University of Cambridge, U.K.}\\
\emph{$^{2}$MRC Biostatistics Unit, School of Clinical Medicine, University of Cambridge, U.K.}\\
\emph{$^{3}$de Duve Institute, UCLouvain, Belgium}\\
}
\end{center}

\bigskip

\begin{center}
Preprint, \today
\end{center}
\bigskip\bigskip

\begin{abstract}
Understanding sub-cellular protein localisation is an essential component to analyse context
specific protein function. Recent advances in quantitative mass-spectrometry (MS) have led to high resolution mapping of thousands of proteins to sub-cellular locations within the cell. Novel
modelling considerations to capture the complex nature of these data are thus necessary.
We approach analysis of spatial proteomics data in a non-parametric Bayesian framework, using
mixtures of Gaussian process regression models. The Gaussian process regression model accounts
for correlation structure within a sub-cellular niche, with each mixture component capturing the distinct correlation structure observed within each niche. Proteins with \textit{a priori} labelled locations motivate using  semi-supervised learning to inform the Gaussian process hyperparameters. We moreover provide an efficient Hamiltonian-within-Gibbs sampler for our model. As in other recent work, we reduce the computational burden
associated with inversion of covariance matrices by exploiting the structure in the covariance matrix. A tensor decomposition allows extended Trench and Durbin 
algorithms to be applied in order to reduce the computational complexity of covariance matrix inversion and hence accelerate computation.  A stand-alone R-package implementing these methods using high-performance C++ libraries is available at: \url{https://github.com/ococrook/toeplitz}
\end{abstract}

\section{Introduction}
For a protein to make appropriate interactions with binding partners and substrates, it must localise to the correct sub-cellular compartment \citep{Gibson:2009}. Furthermore, there is mounting evidence implicating aberrant protein localisation in disease, including cancer and obesity \citep{Olkkonen:2006, Laurila:2009, Luheshi:2008, De:2011, Cody:2013, Kau:2004, Rodriguez:2004, Latorre:2005, Shin:2013, Siljee:2018}. Mapping the sub-cellular location of proteins using high-resolution spatial proteomic approaches are thus of high utility in the characterisation of therapeutic targets and in determining pathobiological mechanisms \citep{Cook:2019}. To interrogate the sub-cellular locations of thousands of proteins per experiment, recent advances in high-throughput spatial proteomics
\citep{hyper, Mulvey:2017, DC:2018}, followed by rigorous data analysis \citep{Gatto:2010} can be applied.  The methodology relies on the observation that organelles, macro-molecular complexes and, more generally, sub-cellular niches are characterised by density-gradient profiles without the necessity to create homogeneous preparations of major sub-cellular components \citep{DeDuve:1981}.

Mass spectrometry(MS)-based spatial proteomics experiments begin with gentle lysis of the cell in such a way that maintains the integrity of their organelles. A diverse set of methods are available to
separate cellular content, including equilibrium density separation \citep{Dunkley:2004, Dunkley:2006,hyper} and
differential centrifugation \citep{Itzhak:2016, DC:2018, Orre:2019}, amongst others \citep{Parsons:2014, Heard:2015}. In the LOPIT
\citep{Dunkley:2004, Dunkley:2006, Sadowski:2006} and \textit{hyper}LOPIT \citep{hyper, Mulvey:2017} approaches, cell lysis is proceeded by the separation of
sub-cellular components along a continuous density gradient based on their buoyant density.
Discrete fractions along this gradient are then collected, and protein distributions revealing organelle specific correlation profiles within the fractions are achieved using high accuracy MS.

Sophisticated data analysis methods for spatial proteomics have been developed
\citep{Breckels:2013, Gatto:2014b, Breckels:2016, Crook::2018, Gatto:2019}, along with detailed work flows
\citep{ghrepo}, which have benefited from significant contributions to the R programming language \citep{R} and the Bioconductor project \citep{Bioconductor::2004, Huber::2015} through MS and proteomics packages \citep{MSnbase:2012, pRoloc:2014}.
Applications of such methods have led to organelle-specific localisation
information of proteins in many systems \citep{Dunkley:2006, Tan:2009, hall:2009, Breckels:2013}, including mouse pluripotent stem cells \citep{hyper} and cancer cell lines \citep{Thul:2017}. MS based spatial proteomics has gained in popularity in recent years with several recent applications across many different cell lines \citep{hyper, Beltran:2016, Jadot:2017, Itzhak:2017, Mendes:2017, Hirst:2018, Davies:2018, Orre:2019, Nightingale:2019}. This
motivates the development of a unified statistical framework for spatial proteomics.
Furthermore, with the absence of a mechanistic model for the data, quantifying uncertainty in systems biology applications is of paramount importance and, as yet, such a model has not be applied to existing datasets \citep{Kirk:2015}.

The current goal of computational methods is to assign proteins with unknown localisation to known sub-cellular niches. It is important to note, however, that not all proteins can be robustly assigned to single locations, since many proteins function in multiple cellular compartments, they may reside in uncharacterised organelles or they may translocate between multiple locations all leading to uncertainty in assignments. Recently, \cite{Crook::2018} demonstrated the importance of uncertainty quantification in spatial proteomics analysis. This study developed a generative model of these data and computed posterior distributions of protein localisation probabilities demonstrating the variety of reasons for uncertain protein localisation. This study, however, failed to model known features of the biochemical fractionation process. Sub-cellular niches
are not purified in a single fraction along the density-gradient but have distinct
quantitative profiles reflective of their biophysical properties. Since the exact quantitative profiles depends heavily on experimental design, the functions describing these profiles are unknown. This motivates a non-parametric Bayesian approach to analysing these data in order to infer the unknown functions and quantify the uncertainty in these functions.

We assume each quantitative protein profile can be described by some unknown function, with the uncertainty in this function captured using a \textit{Gaussian process (GP) prior}. Each sub-cellular niche is described by distinct density-gradient profiles, which display a non-linear structure with no particular parametric assumption being suitable. The contrasting density-gradient profiles are captured as components in a mixture of Gaussian process regression models. Gaussian process regression models have been applied extensively and we refer to \cite{Rasmussen:2004} and \cite{Rasmussen:2006} for the general theory. In molecular biology and functional genomics the focus of many applications has been on expression time-series data, where sophisticated models have been developed \citep{Kirk:2009, Cooke:2011, Kalaitzis:2011, Kirk:2012, Hensman:2013,Strauss:2019}. We remark that many of these applications consider unsupervised clustering problems. In contrast, here we have (partially) labelled data (proteins with  location known prior to our experiments) and so we may consider semi-supervised approaches. We explore inference of GP hyperparameters in two ways: firstly, an empirical Bayes approach in which the hyperparameters are optimised by maximising a marginal likelihood; secondly, by placing priors over these GP hyperparameters and performing fully Bayesian inference using labelled and unlabelled data.

A number of computational aspects need to be considered if inference is to be applied to spatial proteomics data. The first is that correlation in the GP hyperparameters can lead to slow exploration of the posterior, thus we use Hamiltonian evolutions to propose global moves through our probability space \citep{Duane:1987} avoiding random walk nature evident in
traditional symmetric random walk proposals \citep{Metropolis:1953, Beskos:2013}. Hamiltonian Monte-Carlo (HMC) has been explored previously for hyperparameter inference in GP regression \citep{Williams::1996}, and here we show that HMC can be up to an order of magnitude more efficient than a Metropolis-Hastings approach. Furthermore, a particular costly computation in our model is the computation of the marginal likelihood (and its gradient) associated with each mixture component, which involves the inversion of a large covariance matrix - even storage of such matrix can be challenging. We demonstrate that a tensor decomposition of the covariance matrix allows application of fast matrix algorithms for covariance inversion and low memory storage \citep{Zhang:2005}.

\section{Methods}
\subsection{Model specification}
In our experiment, we make discrete observations along a continuous density-gradient, ${\bf x}_i = \left[x_i(t_1),...,x_i(t_D)\right]$, where $x_i(t_j)$ indicates the measurement of protein $i$ at fraction $t_j$ along the gradient. We assume that protein intensity $x_i$ varies smoothly with the distance along the density-gradient. We further assume observations are equally spaced. Thus, the regression model for each protein is
\begin{equation}
x_i(t_j) = \mu_i(t_j) + \epsilon_{ij},
\end{equation}
where $\mu_i$ is an unknown deterministic function of space and $\epsilon_{ij}$ a noise variable. We assume that $\epsilon_{ij}\sim_{iid} \mathcal{N}(0, \sigma_{i}^2)$, for simplicity and remark that more elaborate noise models could be chosen but at additional computational cost and greater model complexity. Proteins are grouped together according to their sub-cellular localisation, with all proteins associated with sub-cellular niche $k = 1,...,K$ sharing the same regression model; that is,
$\mu_i = \mu_k$ and $\sigma_{i} = \sigma_k$. For clarity, we refer to sub-cellular structures, whether that be organelles, vesicles or large multi-protein complexes, as \textit{components}. Thus proteins associated with component $k$ can be modelled as \textit{i.i.d} draws from a multivariate Gaussian random variable with mean vector $\boldsymbol{\mu}_k = \left[\mu_k(t_1),...,\mu_k(t_D)\right]$ and covariance matrix $\sigma^2_kI_D$. To perform inference for unknown $\boldsymbol{\mu}_k$, as is typical for spatial correlated data \citep{Gelfand:2005, Steel:2010}, we specify a \textit{Gaussian Process prior} for each $\mu_k$
\begin{equation}
\mu_k \sim GP(m_k(t), C_k(t,t')).
\end{equation}
Each component is thus captured by a Gaussian process regression model and the full complement of proteins as a finite mixture of Gaussian process regression models.
\subsection{Finite mixture models}
This section provides a brief review of finite mixture models (see, for example \citep{Lavine:1992,Fraley:2007} for more details) . Finite mixture models are of the form,
\begin{equation}
p({\bf x}|\boldsymbol{\pi}, \boldsymbol{\theta}) = \sum_{k=1}^K \pi_k F({\bf x}|\boldsymbol{\theta}_k),\label{finite}
\end{equation}
where $K$ is the number of mixture components,  $\pi_k$ are the mixture proportions, and $F({\bf x}|\boldsymbol{\theta}_k)$ are the component densities.  We assume each component density to have the same parametric form, but with component specific  parameters, $\boldsymbol{\theta}_k$.
We denote the prior for these unknown component parameters by $G_0(\boldsymbol{\theta})$.  
We suppose that we have a collection of $n$ data points, $X = \{{\bf x}_1, \ldots, {\bf x}_n \}$ that we seek to model using Equation \eqref{finite}.  We associate with each of these data points a component indicator variable, $z_i \in \{1, \ldots, K\}$, which indicates which component generated  observation ${\bf x}_i$.  Given the mixing proportions, the joint prior distribution of these indicators is multinomial with parameter vector $\boldsymbol{\pi} = [\pi_1, \ldots, \pi_K]$,
\begin{equation}  
P(z_1, \ldots, z_n | \boldsymbol{\pi}) = \prod_{k=1}^K \pi_k^{n_k},
\end{equation}
where $n_k$ is the number of data points $x_i$ for which $z_i = k$.  If we assign the mixture proportions a symmetric Dirichlet prior with concentration parameter $\alpha/K$, then we may marginalise the $\pi_k$ in order to yield the following joint distribution for the indicators \citep{Murphy:2012},
\begin{equation}
P(z_1, \ldots, z_n | \alpha) = \frac{\Gamma({\alpha})}{\Gamma(n+\alpha)}\prod_{i=1}^K \frac{\Gamma(n_i + \alpha/K)}{\Gamma(\alpha/K)}.
\end{equation}
For Gibbs sampling, we require the conditional priors for a single indicator, $z_i$, given all of the others, $z_{-i}$.  These are given by \citep{Murphy:2012}, 
\begin{equation}
P(z_i = k | z_{-i}, \alpha) = \frac{n_{-i,k} + \alpha/K}{N - 1 + \alpha}, 
\end{equation}
where $n_{-i,k}$ is the number of observations, excluding ${\bf x}_i$, that are associated with component $k$.
If we are given the parameters, $\boldsymbol{\theta}_k$, associated with each of the components then we may combine the above conditional priors with the likelihoods, $F({\bf x}_i| {\bf \boldsymbol{\theta}}_k)$, in order to obtain the conditional posterior:
\begin{equation}\label{equation::conditionalposterior}
P(z_i = k | z_{-i}) \propto \frac{n_{-i,k} + \alpha/K}{N - 1 + \alpha} F({\bf x}_i| {\bf \boldsymbol{\theta}}_k).
\end{equation}
An alternative to integrating out the mixture proportions is to sample them directly from the posterior, which leads to increased posterior variance \citep{Gelfand:1990, Casella:1996} but can be computational advantageous. Conjugacy of the Dirichlet prior and multinomial likelihood means that the posterior distribution of the mixing proportions is also
Dirichlet,
\begin{equation}\label{equation::posteriormixing}
\boldsymbol{\pi}|z_1,...,z_n, \alpha \sim Dir(\alpha/K + n_1,..., \alpha/K + n_K).
\end{equation}
In this situation the conditional posterior becomes
\begin{equation}\label{equation::conditionalposteriorsample}
P(z_i = k |\boldsymbol{\pi}) \propto \pi_k F({\bf x}_i| {\bf \boldsymbol{\theta}}_k).
\end{equation}

\subsection{Gaussian Process priors}
A Gaussian Process (GP) is a continuous stochastic process such that any finite collection of these random variables is jointly Gaussian. A Gaussian Process prior is uniquely specified by a mean function $m$ and covariance function $C$, which determine the mean vectors and covariance matrices of the associated multivariate Gaussian distributions. To elaborate, assuming a GP prior for $\mu_k$ means that for indices $t_1,...,t_D$, the joint prior of $\boldsymbol{\mu}_k = \left[\mu_k(t_1),...,\mu_k(t_D)\right]^T$, is multivariate Gaussian with mean vector $\boldsymbol{m}_k = \left[m_k(t_1),...,m_k(t_D)\right]$ and covariance matrix $C_k(i,j) = C_k(t_i,t_j) $. Given no prior belief about symmetry or periodicity in our deterministic function, we assume our GP is centred with squared exponential covariance function
\begin{equation}
C_k(t_i,t_j) = a_k^2 \exp\left(- \frac{\lVert t_i - t_j \rVert^2_2}{l_k}\right).
\end{equation}
\subsection{Marginalising the unknown function}
Having adopted a GP prior with component specific parameters $a_k$ and $l_k$ for each unknown function $\mu_k$, we let observations associated with component $k$ be denoted by $X_k = \{x_1,...,x_{n_k}\}$. Our model tells us that
\begin{equation}
X_k |\boldsymbol{\mu}_k, \sigma_k \sim \mathcal{N}(\boldsymbol{\mu}_k, \sigma^2_kI_D).
\end{equation}
Then, we can write this as
\begin{equation}
\begin{split}
X_k(t_1),...,X_k(t_D) &|\mu_k, \sigma_k \sim
\\& \mathcal{N}(\mu_k(t_1),...,\mu_k(t_D),...,\mu_k(t_1),...,\mu_k(t_D), \sigma^2_kI_{n_kD}),
\end{split}
\end{equation}
where $\mu_k(t_1),...,\mu_k(t_D)$ is repeated $n_k$ times. Our GP prior tell us  
\begin{equation}
\mu_k(t_1),...,\mu_k(t_D),...,\mu_k(t_1),...,\mu_k(t_D)|a_{k}, l_{k} \sim \mathcal{N}(0, C_k),
\end{equation}
where $C_k$ is an  $n_kD\times n_kD$ matrix. This matrix is organised into $n_k \times n_k$ square blocks each of size $D$. The $(i,j)^{th}$ block of $C_k$ being $A_k$, where $A_k$ is the covariance function for the $k^{th}$ component evaluated at $\tau = \{t_1,...,t_D\}$.
\begin{equation}\label{equation::blockstructure}
C_k = 
\begin{bmatrix}
A_k & A_k & \dots & A_k \\
A_k & A_k & \dots & A_k \\
\vdots & \vdots & \ddots & \vdots \\
A_k & A_k & \dots & A_k
\end{bmatrix}.
\end{equation}
 Letting $\boldsymbol{\theta}_{k} = \left\{a_k, l_k, \sigma^2_k\right\}$, we can then marginalise $f_k$ to obtain,
\begin{equation}
\begin{split}
X_k(t_1),...,X_k(t_D) |\boldsymbol{\theta}_{k} \sim \mathcal{N}(0, C_k + \sigma^2_kI_{n_kD}),
\end{split}
\end{equation}
thus avoiding inference of $\mu_k$. Let  $X_k(\tau)$ denote the vector of length $n_k\times D$ equal to $\left(x_1(t_1),...,x_{1}(t_D),\dots, x_{n_k}(t_1),...,x_{n_k}(t_D)\right)$. Then we may rewrite equation \ref{equation::conditionalposterior} by marginalising $\mu_k$ to obtain:
\begin{equation}\label{equation::margconditionalposterior}
P(z_i = k | z_{-i}) \propto \frac{n_{-i,k} + \alpha/K}{K - 1 + \alpha} \int p({\bf x}_i|\mu_k)p(\mu_k| \boldsymbol{\theta_k},X_{-i,k}(\tau) ) \,\mathrm{d}\mu_k,
\end{equation}
where $X_{-i,k}(\tau)$ is equal to $X_k(\tau)$ with observation $i$ removed.
\subsection{Tensor decomposition of the covariance matrix for fast inference}
Our covariance matrix has a particularly simple structure allowing us to exploit extended Trench and Durbin algorithms for fast matrix computations \citep{Zhang:2005}. Recall  we are interested in the inversion of matrices of the following form
\begin{equation}\label{equation::covariancestructure}
C = 
\begin{bmatrix}
A + \sigma^2I_D & A & \dots & A \\
A & A + \sigma^2I_D & \dots & A \\
\vdots & \vdots & \ddots & \vdots \\
A & A & \dots & A + \sigma^2I_D
\end{bmatrix}.
\end{equation}
Note that $A$ is a positive symmetric matrix of size $D\times D$ and furthermore it is Toeplitz (constant diagonal and perisymmetric). Let $J_n$ denote an $n\times n$ matrix of ones. It is clear that we can write $C$ in the following form:
\begin{equation}
C = \sigma^2I_{nD} + B,
\end{equation}
where
\begin{equation}
B = J_{n} \otimes A,
\end{equation}
and $\otimes$ denotes the Kronecker (tensor) product. Let us write
\begin{equation}
Q = (I_D + \sigma^{-2} n A),
\end{equation}
which is a $D\times D$ Toeplitz matrix, for which the inverse and determinant can be inverted in $O(D^2)$  operations \citep{Durbin:1960, Trench:1964}. If we denote the inverse of $Q$ by $Z$ it follows (see supplementary for full derivation) that:
\begin{equation}
\begin{split}
C^{-1} &= \sigma^{-2}I_{nD} - \sigma^{-4}J_{n}^T \otimes (ZA)\\
& = \sigma^{-2}I_{nD} - \frac{1}{n\sigma^2}J_{n}^T \otimes (I - Z)
\end{split}
\end{equation}
and
\begin{equation}
\det(C) =  (\sigma^2)^{nD} \det(I_D + \sigma^{-2} n A).
\end{equation}
Thus, the inversion of $C$ requires only the inversion of a $D\times D$ matrix, which can be performed in $O(D^2)$ computations, this should be compared with a na\"ive inversion of $C$ requiring $O((nD)^3)$ computations, which represents significant savings. The determinant can also be obtained in $O(D^2)$ operations \citep{Zhang:2005}. The method of \cite{Zhang:2005} can be seen as a special case of our situation where $n = 1$. Step by step algorithms for computing this inverse and determinant can also be found in the supplementary materials.  We note that \citet{Strauss:2019} also exploit the block matrix structure of the covariance matrix efficiently,  using a more general approach to compute block matrix inversions and determinants that works also in the case of hierarchical GP models, for which \citet{Hensman:2013} had found an alternative way of performing efficient likelihood computations. 
   
\subsection{Sampling the underlying function}
Whilst it is often mathematically convenient to marginalise the unknown function $\mu_k$ from a computational perspective it is not always advantageous to do so. To be precise, marginalising $\mu_k$ induces dependencies among the observations; that is, we cannot exploit the conditional independence structure given the underlying function $\mu_k$. After marginalising, Gibbs moves must be made sequentially for each protein in turn and this can slow down computation. 

The alternative approach is to sample the underlying function and exploit conditional independence. Once a sample is obtained from the GP posterior on $\mu_k$, conditional independence allows us to compute the likelihood for all proteins at once, exploiting vectorisation. If there are a particularly large number of observation in each component it is also possible to parallelize computation over the components $k = 1,...,K$.  

\subsection{Modelling outliers}
\cite{Crook::2018} demonstrated that many proteins are not captured well by any known sub-cellular component. This could be because of yet undiscovered biological novelty, technical variation or a manifestation of some proteins residing in multiple localisations. Modelling outliers in mixture models can be challenging \citep{Hennig::2004,Cooke:2011,Coretto::2016}. Here, we take the approach of \cite{Crook::2018}. Briefly, we introduce a further binary latent variable $\phi$ so that for each protein ${\bf x}_i$ we have a $\phi_i$ indicating whether ${\bf x}_i$ is modelled by one of the known components or an outlier component. The augmented model becomes the following
\begin{equation}
p({\bf x}_i|\boldsymbol{\pi}, \boldsymbol{\theta}) = \sum_{k=1}^K \pi_k F({\bf x}_i|\boldsymbol{\theta}_k)^{\phi_i} G(x_i|\Phi)^{1 - \phi_i},\label{augfinite}
\end{equation}
where $G$ is density of the outlier component. In our case, we specify $G$ as the density of a multivariate T distribution
with degrees of freedom $\kappa = 4$, mean ${\bf M}$ and scale matrix $V$. ${\bf M}$ is taken as the empirical global mean of the data and the scale matrix $V$ as half the empirical covariance of the data. These choices are motivated by considering a Gaussian component with the same mean and covariance but with heavier tails to better capture dispersed proteins. We remark that other choices of G and parameters may be suitable and can be tailored to the application at hand. In typical Bayesian fashion, we specify a prior for $\phi$ as $p_0(\phi_i = 0) = \epsilon$, where $\epsilon \sim \mathcal{B}(u,v)$. All hyperparameter choices are stated in the appendix.
\subsection{Gaussian process hyperparameter inference}
\subsubsection{Supervised approach: optimising the hyperparameters}
Inference of the hyperparameters $\boldsymbol{\theta_{k}}$ can be dealt with in several ways. The first is to learn them using only the labelled data (i.e. data that pertains to proteins with well documented sub-cellular locations). Using the labelled data for each component constitutes maximise the marginal likelihood of the hyperparameters with respect to the data. These hyperparameters are then fixed throughout the inference of the unlabelled data. The marginal likelihood can be obtained quickly by recalling that
\begin{equation}
\begin{split}
X_k(t_1),...,X_k(t_D) |\boldsymbol{\theta_{k}} \sim N(0, C_k + \sigma^2_kI_{n_kD}).
\end{split}
\end{equation} 
Thus the log marginal likelihood is given by
\begin{equation}\label{equation::GPmarginalliklihood}
\begin{split}
\log p(X_k|\tau,& \boldsymbol{\theta_{k}}) \\
& = -\frac{1}{2}X_k(\tau)\left(C_k + \sigma^2_kI_{n_kD}\right)^{-1}X_k(\tau)^{T} - \frac{1}{2} \log \lvert C_k + \sigma^2_kI_{n_kD}\rvert - \frac{n_kD}{2} \log 2\pi.
\end{split}
\end{equation}
For convenience of notation set $\hat{C}_k = C_k + \sigma^2_kI_{n_kD}$. To maximise the marginal likelihood given equation \ref{equation::GPmarginalliklihood}, we find the partial derivatives with respect to the parameters \citep{Rasmussen:2004}. Hence, we can use a gradient based optimisation procedure. Positivity constraints on $a^2_k,l_k,\sigma_k^2$ are dealt with by re-parametrisation and so, dropping the dependence on $k$ for notational convenience, and abusing notation, we set $l = \exp(\theta_1)$, $a^2 = \exp(2\theta_2)$ and $\sigma^2 = \exp(2\theta_3)$.
\\
\\
Application of the quasi-Newton L-BFGS algorithm \citep{Liu:1989} for numerical optimisation of the marginal likelihood with respect to the hyperparameters is now straightforward. The L-BFGS can only find a local optimum and so we initialise over a grid of values. We terminate the algorithm when successive iterations of the gradient are less than $10^{-8}$. We make extensive use of high performance R packages to interface with C++ \citep{Rcpp:2011,arma:2014}.
\subsubsection{Semi-supervised hyperparameter inference}
The advantage of adopting a Bayesian approach to hyperparameter inference is that we can quantify uncertainty in these hyperparameters. Uncertainty quantification in GP  hyperparameter inference is important, since different hyperparameters can have a strong effect on the GP posterior \citep{Rasmussen:2004}. Furthermore, we consider a semi-supervised approach to hyperparameter inference. By a semi-supervised approach we mean that a posterior distribution for the hyperparameters can be inferred using both the labelled and unlabelled data, rather than just the labelled data.
\\
\\
Consider at some iteration of our MCMC algorithm the data associated to the $k^{th}$ component $X_k$. We can partition this data into the unlabelled (U) and labelled data (L); in particular, $X_k = \left[X^{(L)}_k, X^{(U)}_k\right]$. To clarify, the indicators $z_i$ are known for $X^{(L)}_k$ prior to any inference, whilst allocations $z_i$ for $X^{(U)}_k$ are sampled at each iteration of our MCMC algorithm. If we believe our labelled data $X^{(L)}_k$ are true representatives of the distribution of that component, it is computationally advantageous just to consider the labelled data when performing hyperparameter inference. However, there could be a sampling bias in the labelled data and so the labelled data alone is insufficient to explain the variability in the data. A semi-supervised approach allows the posterior distribution of the hyperparamters to reflect the uncertainty in the component allocations $z_i$ and therefore improve our abilities to predict allocations and quantify uncertainty in allocations.  
\subsubsection{Semi-Supervised approach: hyperparameter inference using MH}
In a Bayesian framework, we treat the hyperparameters as random variables and place hyperpriors overs them. Positivity constraints  motivate working with the $\log$ of the hyperparameters and using, for example, standard normal priors \citep{Neal:1997}. Unfortunately loss of conjugacy between the prior on the hyperparameters $G_0(\boldsymbol{\theta})$ and the likelihood $F({\bf x}| \boldsymbol{\theta})$ is unavoidable, and hence we use a Metropolis-Hastings step or Hamiltonian Monte-Carlo step for inference. The Metropolis-Hastings sampler can be summarised as follows:
\\
\\
Metropolis-Hastings algorithm with random walk proposals: Suppose $\theta_t$ is the most recently sampled value. Sample a value $\xi \sim N(0,1)$, setting $\theta_{t+1} = \theta_{t} + \xi$  and compute the Metropolis ratio 
\begin{equation}
\Lambda = \frac{p(\theta_{t+1}|X_k(\tau))}{p(\theta_{t}|X_k(\tau))} = \frac{p(X_k(\tau)|\theta_{t+1})p_0(\theta_{t+1})}{p(X_k(\tau)|\theta_{t})p_0(\theta_{t})}.
\end{equation}
This ratio can be computed in $\log$ form using equation $\ref{equation::GPmarginalliklihood}$. Then sample a uniform random number $u\sim U[0,1]$  if $\log(\Lambda) \geq \log(u)$ set $\theta_{t+1} =\theta_t + \xi$, otherwise $\theta_{t+1} =\theta_t$.
\subsubsection{Semi-Supervised approach: hyperparamter inference using HMC}
To avoid the random walk nature of the MH sampler, we also consider a Hamiltonian Monte-Carlo approach, which exploits the geometry of the space to provide more efficient proposals \citep{Duane:1987,Horowitz:1991, Neal:2011,Girolami::2011}. In short, Hamiltonian Monte-Carlo allows us to construct Hamiltonian evolutions $H(\bm{x},\bm{p})$ such that the resulting dynamics efficiently explore a target distribution $p(\bm{x})$. We augment our probability distribution with an auxiliary momentum component $\bm{p}$. An MCMC algorithm can then be constructed to sample from the required distribution, where proposals are made using Hamiltonian evolutions. Full details in the case of the hyperparameters of a Gaussian process are discussed in the supplement. In previous sections, we saw we can exploit a tensor decomposition to accelerate computation of the likelihood and similar formulae are available to accelerate computation of the gradient for use in L-BFGS and Hamiltonian Monte Carlo. These formulae can be found in the supplement.
\subsubsection{An overview of the MCMC algorithm for posterior Bayesian computation}
In our model $G_0(\boldsymbol{\theta})$ and $F({\bf x}| \boldsymbol{\theta})$ are non-conjugate, which means the integral in equation \ref{equation::margconditionalposterior} cannot be obtained analytically. A Gibbs sampling scheme with either an additional Metropolos-Hastings or Hamiltonian Monte Carlo update is used. Each iteration of the MCMC algorithm includes a sampled value for the component indicators, outlier components and current values of the hyperparameters. We also keep track of associated posterior probabilities and marginal likelihoods as appropriate. Furthermore, we can sample the hyperparameters every $T$ iterations of the MCMC algorithm to accelerate computations.
\subsection{Summarising uncertainty in posterior localisation probabilities}
Summarising uncertainty quantified by Bayesian analysis in an interpretable way can be challenging. As always, we can summarise uncertainty using credible intervals or regions \citep{Gelman:1995}. One particularly challenging quantity of interest to summarise is the uncertainty in posterior allocations. Whilst, each individual allocation of a protein to a sub-cellular niche can be summarised by a credible interval it is not clear what is the best way to summarise the posterior  over all possible localisations for each individual protein. As in previous work \citep{Crook::2018}, we propose to summarise this uncertainty in an information-theoretic approach by computing the Shannon entropy of the localisation probabilities \citep{shannon:1948} at each iteration of the MCMC algorithm
\begin{equation}
\left\{ H_{ik}^{(t)} = - \sum_{k=1}^Kp_{ik}^{(t)} \log p_{ik}^{(t)} \right\}_{t=1}^T,
\end{equation}
where $p_{ik}^{(t)}$ is the probability that protein $i$ belong to component $k$ at iteration $t$. We can then summarise this by a Monte-Carlo average:
\begin{equation}
H_{ik} \approx \frac{1}{T} \sum_{t=1}^{T} H_{ik}^{(t)}.
\end{equation}
We note that, larger values of a Shannon entropy correspond to greater uncertainty in allocations.
\section{Results}
\subsection{Case Study I: \textit{Drosophila melanogaster} embryos}
\subsubsection{Application}
The first case study is the \textit{Drosophila melanogaster} (common fruit fly) embryos \citep{Tan:2009}, in which we compare the supervised and semi-supervised approaches for updating the model hyperparameters. In particular, we explore the effect on the component specific noise term $\sigma^2$, by adopting different inference approaches. For each sub-cellular niche, we learn the hyperparameters by either maximising their marginal likelihood or sampling from their posterior using MCMC. The posterior distribution for the hyperparameters can either be found solely using the labelled data for each component or by making use of labelled and unlabelled data.

Figure \ref{fig:PosteriorTan} demonstrates several phenomena. Reassuringly, the estimates of the noise parameters $\sigma_k^2$ for $k = 1,...,K$ obtained by using the L-BFGS algorithm to maximise the marginal likelihood coincide  with the posterior distributions of the noise parameters, inferred using only the labelled data for each component. However, when we perform inference in a semi-supervised way, by using both the labelled and unlabelled data to make inferences, we make several important observations.

Firstly, in many cases, the posterior using both the labelled and unlabelled data is shifted right towards $0$. Recalling that we are working with the log of the hyperparameters, this indicates that the noise parameters is smaller when solely using the labelled data. This is likely a manifestation of experimental bias, since it is reasonable to believe that proteins with known prior locations are those which have less variable localisations and are therefore easier to experimentally validate. A semi-supervised approach is able to overcome these issues, by adapting to proteins in a dense region of space. In some cases the shift is pronounced, with posteriors of the parameters using labelled and unlabelled data found in the tails of the posterior only using the labelled distribution. Furthermore, we notice shrinkage in the posterior distribution of the noise parameter in the semi-supervised setting. The reduction in variance reduces our uncertainty about the underlying true value of $\sigma_k^2$ for $k = 1,...,K$. This variance reduction is observed in most cases even when these is little difference in the mean of the posteriors.

The primary goal of spatial proteomics is to predict the localisation of unknown proteins from data. Our modelling approach allows the allocation probability of each protein to each component to be used to predict the localisation of unknown proteins. Proteins may reside in multiple locations and some sub-cellular niches are
challenging to separate because of confounding biochemical properties, leading to uncertainty in a proteins localisation. Thus adopting a Bayesian approach and quantifying this uncertainty is of great importance. Our methods allow point-estimates as well as interval estimates to be obtained for the posterior localisation probabilities. Figure \ref{fig:pcaTan} demonstrates the results of applying our method. Each protein in this PCA plot is scaled according to mean of the Monte-Carlo samples from the posterior localisation probability.

Further visualisation of the model and data are possible. We plot two representative example of gradient-density profiles for two components the endoplasmic reticulum (ER) and the nucleus, in figure \ref{fig:gdTan}. We plot
both the labelled proteins, in colour, which were assigned to each component before our analysis. In grey, for both components, we plot the unlabelled proteins which have been allocated to these components probabilistically. We observe that they have the same gradient-density shape as the labelled proteins - in line with our beliefs about the underlying biology: that proteins from the same components should co-fractionate  and therefore have similar density gradient profiles. In addition, we overlay the posterior predictive distribution for these components and observe they represent the data well.
\begin{figure}[h]
	\centering
	\includegraphics[width= 12cm, height = 8cm]{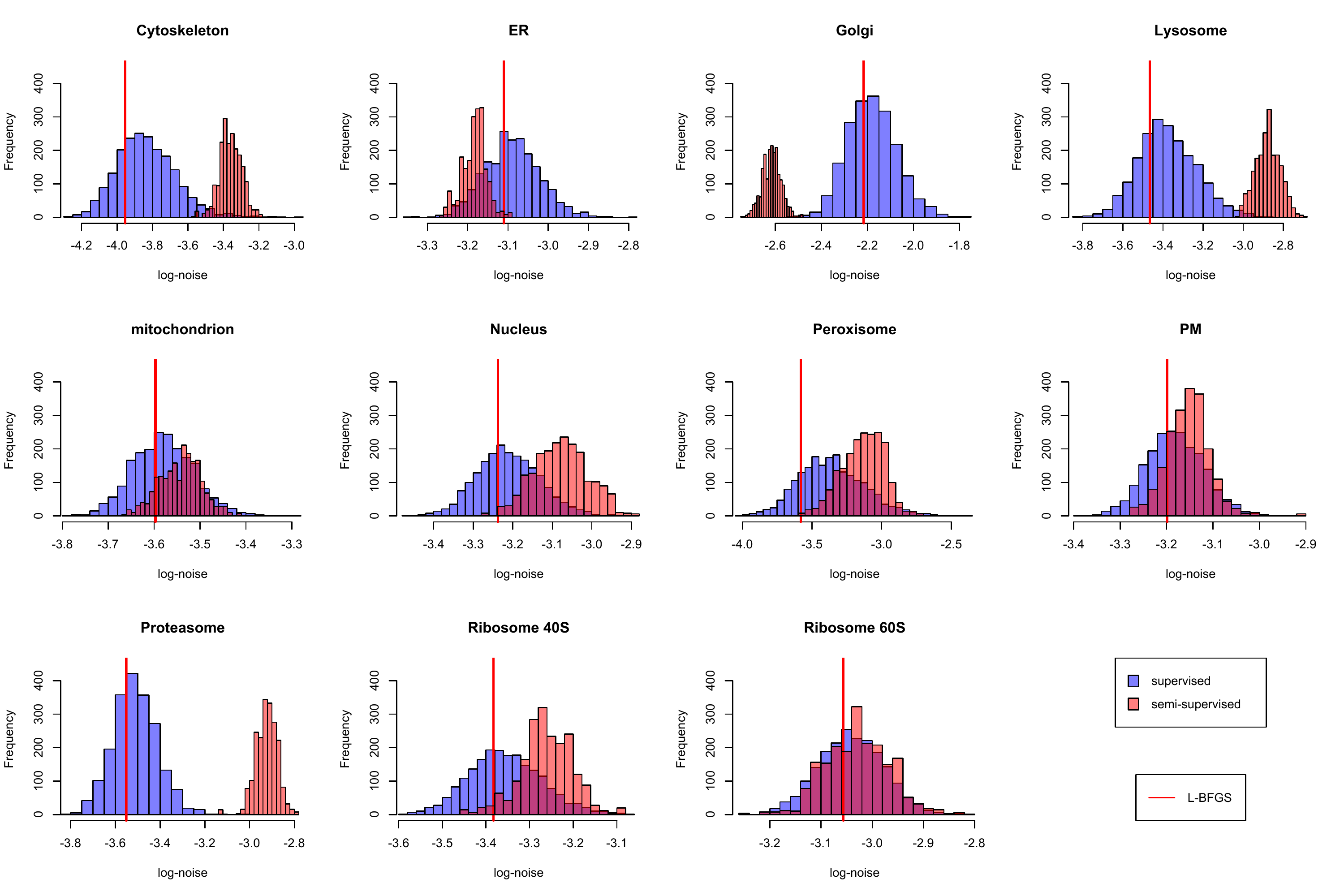}
	\centering
	\caption{Posterior distributions for the log noise parameter $\sigma^2$ on the
		\textit{Drosophila} data. In general, we observe a shift towards $0$, indicating that the labelled data underestimates the value of the noise term for each component. We also observe increased posterior shrinkage for many components with the variance of the noise parameters reduced in the semi-supervised setting.
	}
	\label{fig:PosteriorTan}
\end{figure}
\begin{figure}[h]
	\centering
	\includegraphics[width= 14cm, height = 11cm]{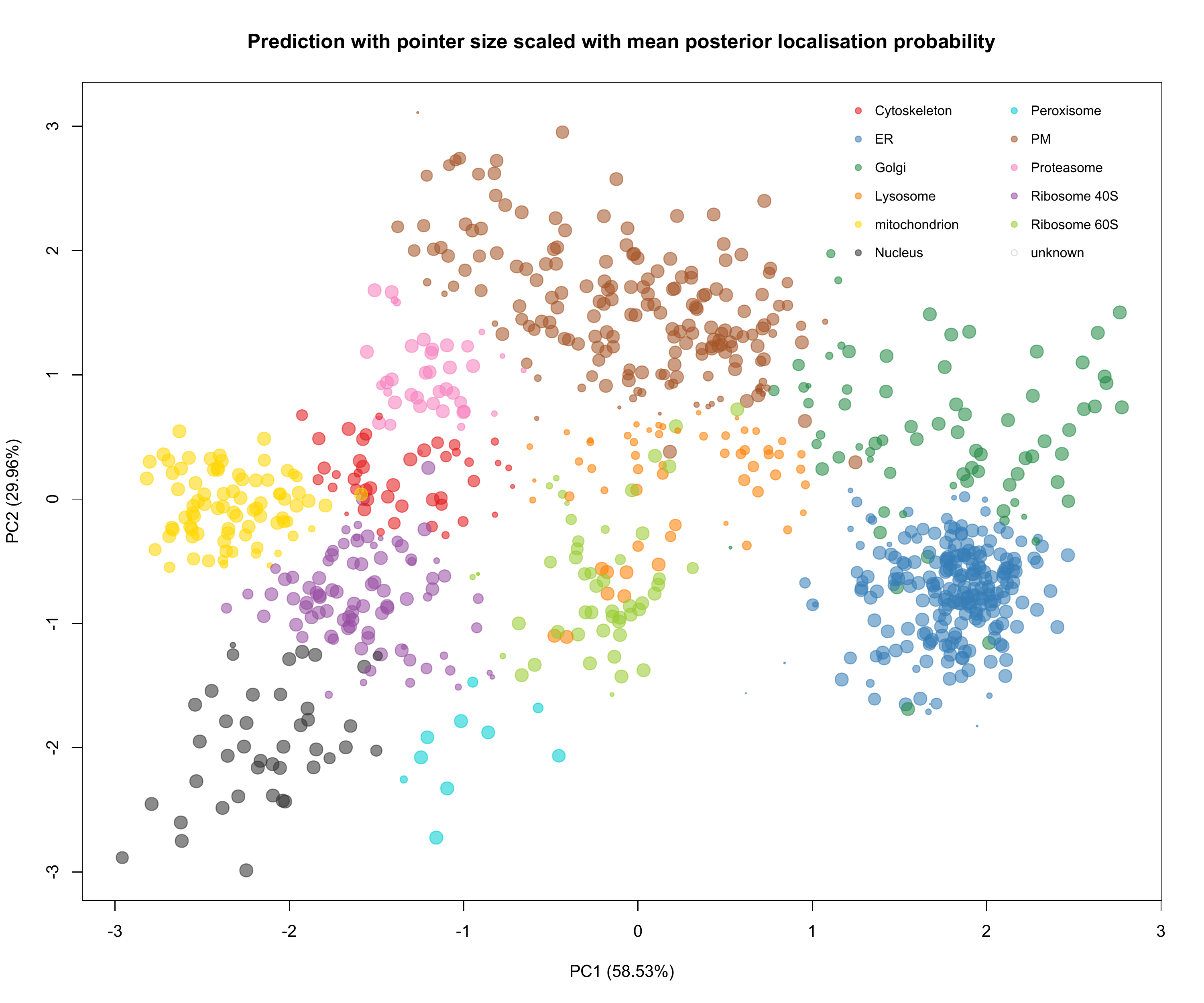}
	\centering
	\caption{A pca plot for the \textit{Drosophila} data where points, representing proteins, are coloured by the component of greatest probability. The pointer for each protein is scaled according to  membership probability with larger/smaller points indicating greater/lower allocation probabilities.
	}
	\label{fig:pcaTan}
\end{figure}
\begin{figure}[ht]
	\centering
	\begin{subfloat}[]{
		\includegraphics[height=1.2in]{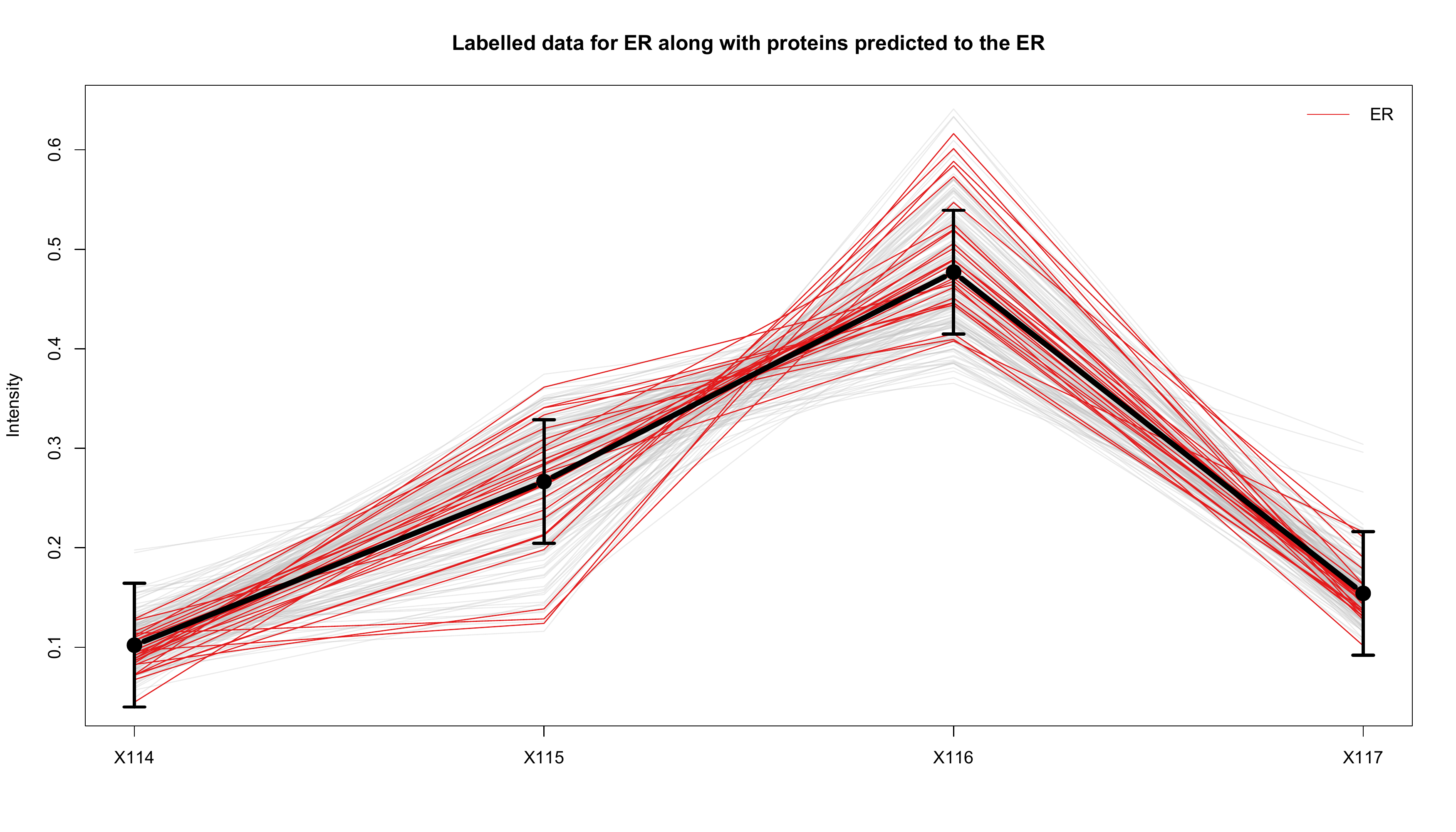}}
	\end{subfloat}%
	~
	\begin{subfloat}[]{
		\includegraphics[height=1.2in]{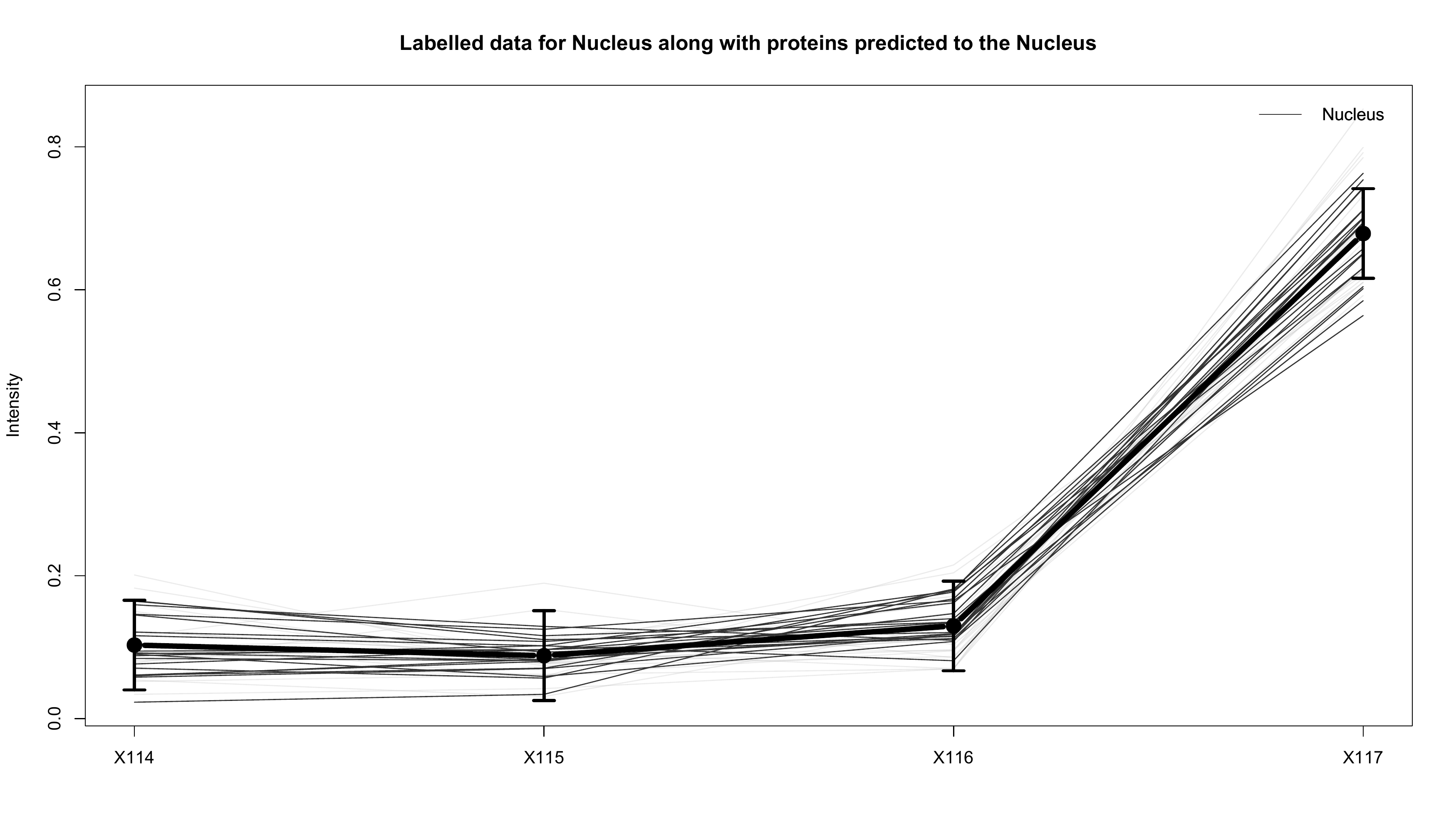}}
	\end{subfloat}%
	\caption{A plot of the gradient-density profiles for the ER and Nucleus with labelled proteins in colour and protein probabilistically assigned to those components in grey. The profiles of the assigned proteins closely match the profiles of the components. The predictive posterior of these components is also overlayed}
	\label{fig:gdTan}
\end{figure}
\clearpage
\newpage
\subsubsection{Sensitivity analysis for hyper-prior specification}
 We use the \textit{Drosophila melanogaster} dataset to test for sensitivity of the hyper-prior specification. To test for sensitivity, we see if predictive performance is affected by changes in the choice of hyper-prior. The following cross-validation schema assesses whether predictive performance is affected by choice of hyper-prior. We split the labelled data for each experiment into class-stratified  training $(80\%)$ and test $(20\%)$ partitions, with the separation formed at random. The true classes of the test profiles
are withheld from the classifier, whilst MCMC is performed. This $80/20$ data stratification is performed $100$ times in order produce a distribution of scores.
We compare the ability of the methods to probabilistically infer the true classes using the quadratic loss, also referred to as the Brier score \citep{Gneiting:2007}. Thus a distribution of quadratic losses is obtained for each method, with the preferred method minimising the quadratic loss. Each
method is run for $10,000$ MCMC iterations with $1000$ iterations for burn-in. We vary the mean of the standard normal hyper-prior for each hyperparameter in turn for a grid of values $\tilde{m}= (0, -1, -2, -3, -4)$, keeping the hyper-prior for the other variable held the same as a standard normal distribution. The results are displayed in figure \ref{SensitivityAnalysis}.

\begin{figure}[h]
	\includegraphics[width=12cm, height = 8cm]{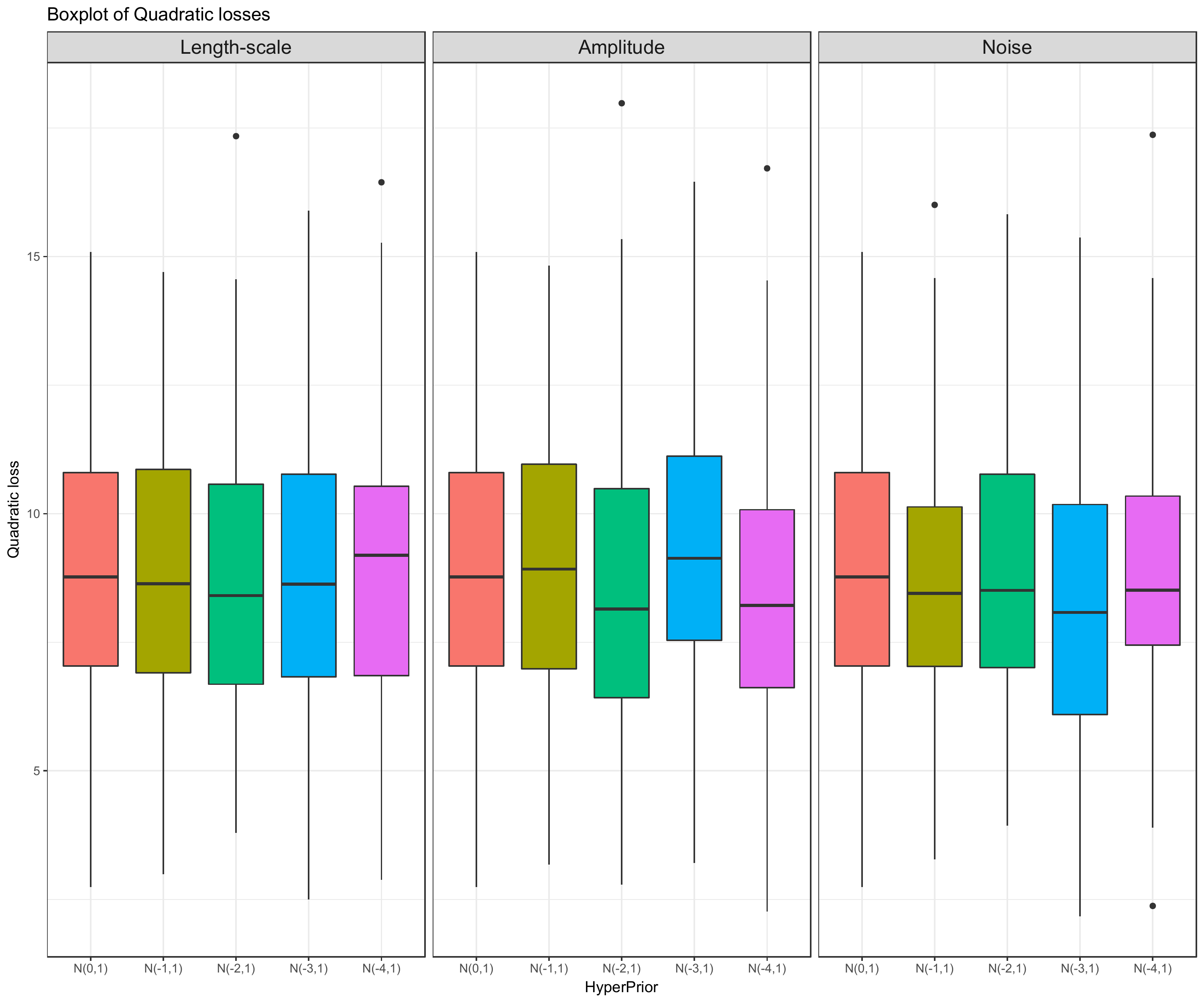}
	\centering
	\caption{Boxplots of quadratic losses to asses the sensitivity of semi-supervised hyperparameter inference to hyper-prior choices. 
	}
	\label{SensitivityAnalysis}
\end{figure}

We observe only minor sensitivity to the choice of hyper-prior, with no significant difference in performance noted (KS test, threshold = $0.01$). Sensitivity analysis for hyperparameters of GPs is vital, since these hyperparameters have a strong effect on the posterior of the GP \citep{Rasmussen:2004}. The observed lack of sensitivity in our case is advantageous, since prior information can be included without fear of over fitting. However, practitioners should always take care when specifying priors, especially for variance/covariance parameters as many authors have noted sensitivity of Bayesian models to these parameters \citep{Gelman:1995,Lunn:2000,Gelman::2006, Wang:2011,Schuurman:2016}

\subsection{Case Study II: mouse pluripotent embryonic stems cells} \label{section:mouse}
\subsubsection{Application}
Our main case study is the mouse pluripotent E14TG2a stem cell dataset of \cite{hyper}. This dataset contains 5032 quantitative protein profiles, and resolves 14 sub-cellular niches. We first plot the density-gradient profiles of the marker proteins for each sub-cellular niche in figure \ref{fig:GPallprofiles}. We fit a Gaussian process prior regression model for each sub-cellular niche with the hyperparameters found by maximising the marginal likelihood.
\begin{figure}[h]
	\includegraphics[width=12cm]{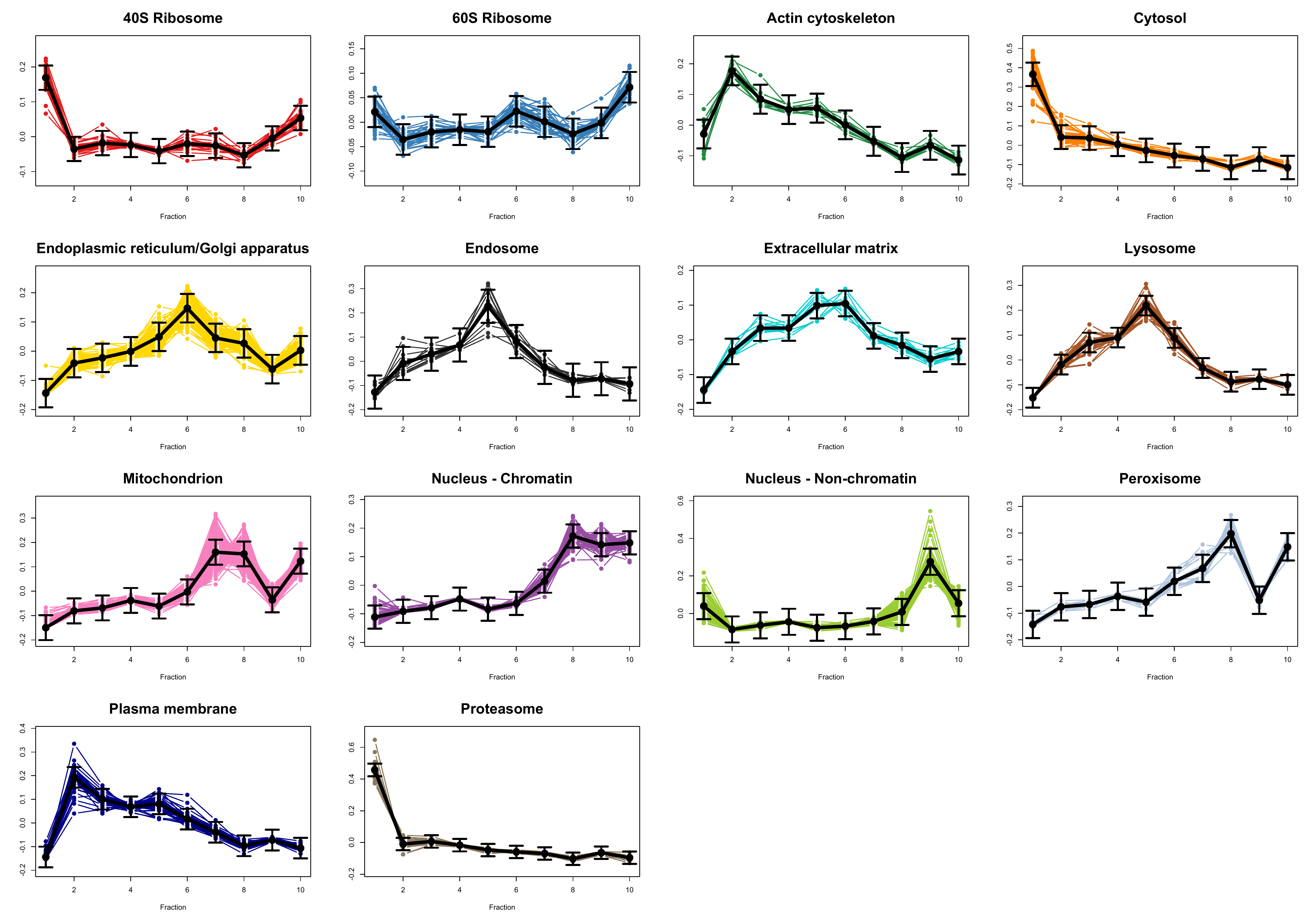}
	\centering
	\caption{Quantitative profiles of protein markers for each sub-cellular niche. A GP prior regression model is fitted to these data and the predictive distribution is displayed. We observe distinct distributions for each sub-cellular niche generated by the unique density-gradient properties of each sub-cellular niche.
	}
	\label{fig:GPallprofiles}
\end{figure}
A table of unconstrained log hyperparameter values found by maximising the marginal likelihood is found in the supplement. Alternatively, placing standard normal priors on each of the log hyperparameters and using a Metropolis-Hastings update we can infer the distributions over these hyperparameters. We perform $20,000$ iterations for each sub cellular niche and discard $15,000$ iterations for burn-in and proceed to thin the remaining samples by $20$. We summarise the
Monte-Carlo sample by the expected value as well as the $95\%$ equi-tailed credible interval, which can also be found in the supplement.

We go further to predict proteins with unknown localisation to annotated components using our proposed mixture of GP regression models. As before, we adopt a semi-supervised approach to hyperparameter inference. Again we place standard normal hyper-priors on the log of the hyperparameters. We run our MCMC algorithm for $20,000$ iterations with half taken as burnin and thin by $5$, as well as using HMC to update the hyperparameters. The PCA plot in figure \ref{fig:andyPCA} visualises our results. Each pointer represent a single protein and is scaled either to the probability of membership to the coloured component (left) or scaled with the Shannon entropy (right). In these plots we observe regions of high-probability and confidence to each organelle, as well as obtaining a global view of uncertainty. In this example, we observe regions of uncertainty, as measured by the Shannon entropy, concentrating where components overlap. We also 
observe uncertainty in regions where there is no dominant component. This Bayesian analysis provides a wealth of information on the global patterns of protein localisation in mouse pluripotent embryonic stem cells.
\begin{figure}[h]
	\centering
	\includegraphics[width= 15cm, height = 10cm]{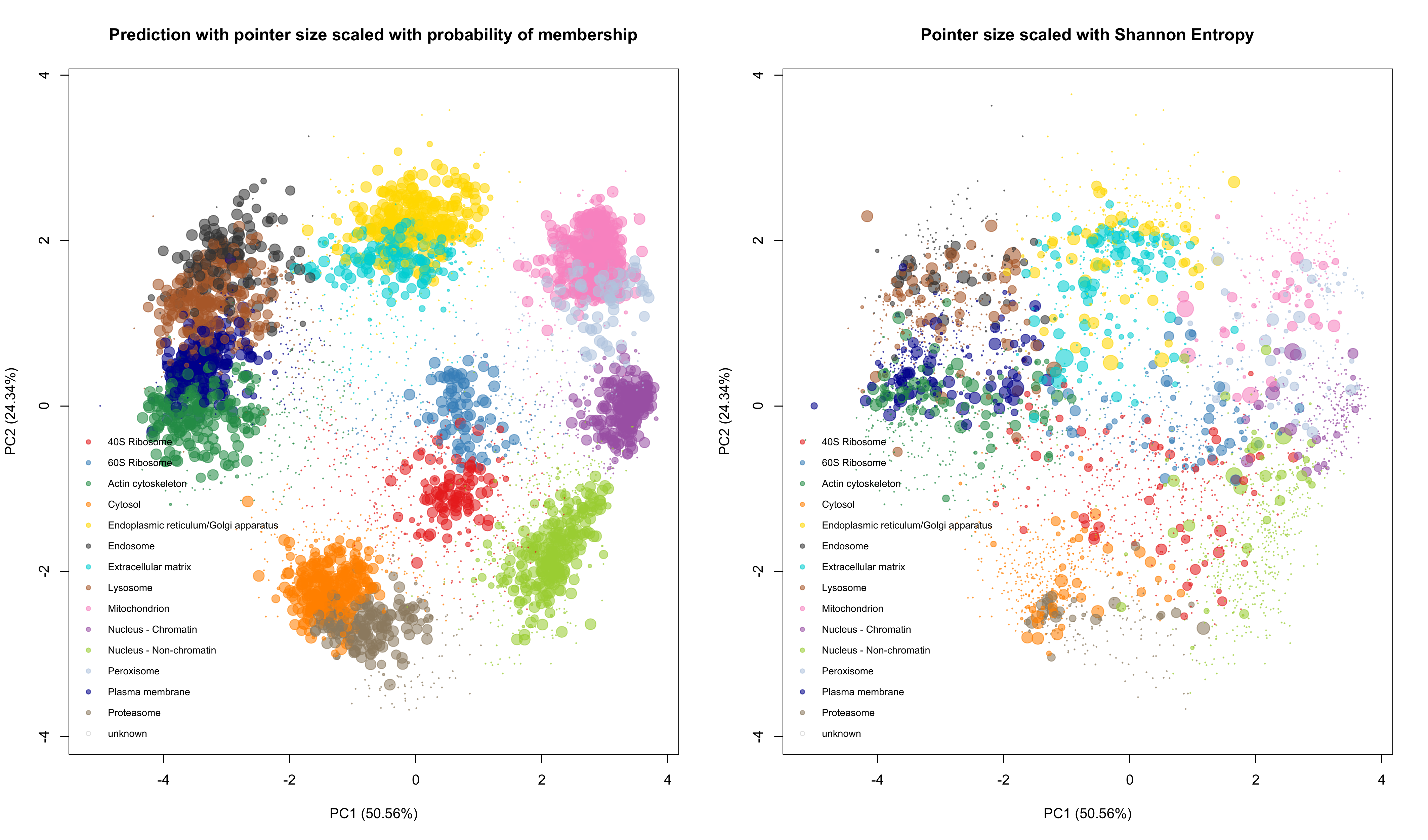}
	\centering
	\caption{A pca plot for the mouse pluripotent embryonic stem cell data where points, representing proteins, are coloured by the component of greatest probability. The pointer for each protein is scaled with membership probability (left). The pointer for each protein is scaled with the Monte-Carlo averaged Shannon Entropy.
	}
	\label{fig:andyPCA}
\end{figure}
\begin{table}
	\caption{A table summarising the difference in performance between Metropolis-Hastings and Hamiltonian Monte Carlo at sampling the 
	hyperparameters of a GP for several different organelles. For each organelle and for each method we report the acceptance rate and the time-normalised effective sample size. It is clear that HMC outperforms MH according to this metric.}
	\centering
	\fbox{%
    \begin{tabular}{c |c|c c c c c}
		\hline
		Component &	Method & Iterations & Acceptance  & Length-scale & Amplitude & Noise \\
		                    &	              &                &  rate             &  &  &  \\
		\hline
		Cytosol & MH & 50,000 & 0.240& 523 & 659 & 9375\\
		
		& HMC& 500 & 0.716&35348& 54730&  134485\\
		\hline
		Ribosome 40S & MH & 50,000 & 0.297&259 & 582& 10756 \\
		
		& HMC & 500 & 0.742&14114& 44662& 27758 \\
		\hline
		Lysosome & MH & 50,000 & 0.273&403& 821 &10385 \\
		
		& HMC & 500 & 0.710&28558& 40955& 543828\\
		\hline
		Proteosome  & MH & 50,000 & 0.267&408& 712& 10410 \\
		
		& HMC & 500 & 0.800&16243 &27186&  55923\\
		
		\hline
		Actin  & MH & 50,000 & 0.409&436 &1129 & 10841  \\
		
		& HMC & 500 & 0.598&5750&   479&   6342
	\end{tabular}
}
\end{table}

\subsection{Assessing predictive performance}\label{section:performance}
We compare the predictive performance of the methods proposed here,
as well as against the fully Bayesian TAGM model of \cite{Crook::2018}, where
sub-cellular niches are described by multivariate Gaussian distributions
rather than GPs. The following cross-validation schema is used to compare the classifiers. We split the labelled data for each experiment into class-stratified 
training $(80\%)$ and test $(20\%)$ partitions, with
the separation formed at random. The true classes of the test profiles
are withheld from the classifier, whilst MCMC is performed. This $80/20$ data stratification is performed $100$ times in order produce a distribution of scores.
We compare the ability of the methods to probabilistically infer the true classes using the quadratic loss, also referred to as the Brier score \citep{Gneiting:2007}. Thus a distribution of quadratic losses is obtained for each method, with the preferred method minimising the quadratic loss. Each
method is run for $10,000$ MCMC iterations with $1000$ iterations for burn-in. For fair comparison we held priors the same across all datasets. Prior specifications are stated in the supplement.

We compare across $5$ different spatial proteomics datasets across three different
organisms. The datasets we compare our methods on are \textit{Drosophila melanogaster} embryos from \cite{Tan:2009}, the mouse pluripotent embroyonic stem cell dataset of \cite{hyper}, the HeLa cell line dataset of \cite{Itzhak:2016}, the mouse primary neuron dataset of \cite{Itzhak:2017} and finally a CRISPR-CAS9 knock-out coupled to spatial proteomics analysis dataset (AP5Z1-KO1) of \cite{Hirst:2018}. The results are found in figure \ref{fig:compareTAGMGP}.
\begin{figure}[h]
	\includegraphics[width=12cm]{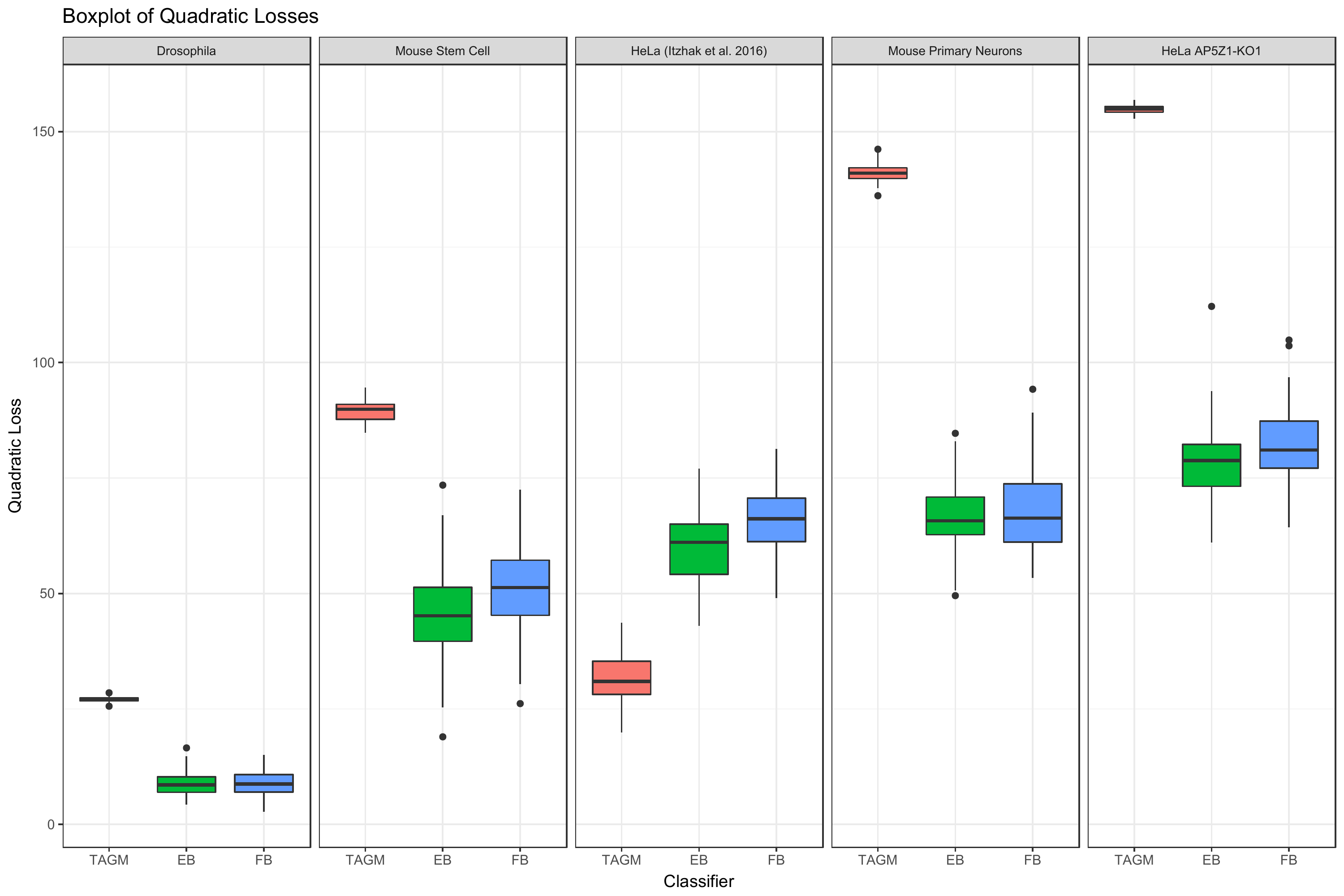}
	\centering
	\caption{Boxplots of quadratic losses comparing predictive performance of 
		the TAGM against the two semi-supervised Gaussian process models described here, where either an empirical Bayes (EB) approach or
		fully Bayesian (FB) approach is used for hyperparameter inference. 
	}
	\label{fig:compareTAGMGP}
\end{figure}

We see that our in four out five datasets there is an improvement of the GP models
over the TAGM model (Kolmogorov-Smirnov (KS) two-sample test $p < 0.0001$ ), because the GP model is provided with more explicit correlation structure of the data. The empirical Bayes slightly method outperforms the fully Bayesian approach in three of the data sets ((KS) two-sample test $p < 0.01$ ). These are the mouse pluripotent embryonic stem cell dataset, the HeLa data set of \cite{Itzhak:2016} and the HeLA AP5Z1 knock-out dataset of \cite{Hirst:2018}. We observe, the TAGM model outperforms the GP methods in the \cite{Itzhak:2016} dataset. The authors of this
study used differential centrifugation to separate cellular content and curated a ``large protein complex" class. This class could contain multiple sub-cellular structures such as ribosomes, as well as cytosolic and nuclear proteins. In any case, our modelling assumptions are violated in both models and this is issue
is exacerbated by parametrising the covariance structure. One solution to this would be to model this mixture of large protein complexes as its own class. However, as this class contains a quite diverse set of sub-cellular compartments, it is difficult to predict behaviour. This class could be itself a mixture of GPs, however the number of
components of the class would be unknown and this would have to be carefully modelled, perhaps using reversible jump methods \citep{Richardson:1997} or Dirichlet process approaches \citep{Escobar:1995}.
	
\section{Discussion}
This article presents semi-supervised non-parametric Bayesian methods to model spatial proteomics data. Sub-cellular niches display unique signatures along a density-gradient and we exploit this information to construct GP regression models for each niche. The full complement of sub-cellular proteins in then described as mixture of GP regression models, with outliers captured by an additional component in our mixture. This provides cell biologist with a fully Bayesian method to analyse
spatial proteomics data in the non-parametric framework that more closely reflects the biochemical process used to generate the data. This greatly increases model
interpretation and allows us to make more biological sound inferences from our model.

We compared the proposed semi-supervised models to the state-of-the-art model on $5$ different spatial proteomics datasets. Modelling the correlation structure along the density-gradient leads to competitive predictive performance over state-of-the-art models. Empirical Bayes procedures perform either equally well or better than the fully Bayesian approach, at the loss of uncertainty quantification in the hyperparameters. Though this performance improvement should not be over interpreted, since cross-validation assessment is only performed on the labelled data and will not reflect any biased sampling mechanisms that could be at play.

To accelerate computation in our model, we note that the structure of our covariance matrix admits a tensor decomposition, which can be exploited so that fast algorithms for matrix inversion of toeplitz matrices can be employed. These decomposition can then be used to derive formulae for fast computation of the likelihood and gradient of a GP. A stand-alone R-package implementing these methods using high-performance C++ libraries is available at \url{https://github.com/ococrook/toeplitz}. These algorithms and associate formulae are useful to those outside the spatial proteomics community to anyone using GPs with equally spaced observations, even in the unsupervised case. 

We demonstrated that in the presence of labelled data there are two approach to hyperparameter inference. This first is to use empirical-Bayes to optimise the hyperparameters; the other a fully-Bayesian approach, taking into account the uncertainty in these hyperparameters. We propose to use HMC to update these hyperparameters, since highly correlated hyperparameters can induce high autocorrelation and exacerbate issues with random-walk MH updates. We demonstrate
that, in the situation presented here, HMC updates can be up to an order of magnitude more efficient than MH updates.
We further explored the sensitivity of our model to hyper-prior specification, which gives practitioners good default choices.

In two case-studies, we highlighted the value of taking a semi-supervised approach to hyperparameter inference, allowing us to explore the uncertainty in our hyperparameters. In a fully Bayesian approach the uncertainty in the hyperparamters is reflected in the uncertainty of the localisation of proteins to components. Quantifying uncertainty provide cell biologists with a wealth of information to make quantifiable inference about protein sub-cellular localisation.

We plan to disseminate our method via the Bioconductor project \citep{Bioconductor::2004, Huber::2015} and include our code in pRoloc package \citep{pRoloc:2014}. The pRoloc package includes methods for visualisation, processing data and disseminating code in a unified framework. All spatial proteomics data used here is freely available within the Bioconductor package pRolocdata \citep{pRolocdata}.

One potential source of uncertainty in protein localisation is that they can be
residents of multiple sub-cellular compartments. We believe that by proposing
a model which more closely reflects the underlying biochemical rationale for the
experiment we can facilitate models which can infer proteins with multiple locations
with greater confidence. This is the subject of further work.

\section{Supplementary}
\subsection{GP Prior, posteriors and predictive distributions}
Denoting our unknown regression function $\mu_k$, we let observations associated with component $k$ be denoted by $X_k = \{x_1,...,x_{n_k}\}$. Our model tells us that
\begin{equation}
X_k |\boldsymbol{\mu}_k, \sigma_k \sim \mathcal{N}(\boldsymbol{\mu}_k, \sigma^2_kI_D),
\end{equation}
where
\begin{equation}
\boldsymbol{\mu}_k|a_k,l_k \sim GP(0, C_k).
\end{equation}
The posterior distribution follows from normal theory:
\begin{equation}
\boldsymbol{\mu}_k|X_k, \sigma_k \sim GP(\tilde{\boldsymbol{\mu}}_k, \tilde{C}_k),
\end{equation}
where
\begin{align}
\tilde{\boldsymbol{\mu}}_k(t) &= C_k(t,\tau) (C_k(\tau, \tau) + \sigma^2_kI_{n_{k}D})^{-1}X_k(\tau)  \\
\tilde{C}_k(t,t')& = C_k(t,t') - C_k(t,\tau)(C_k(\tau, \tau) + \sigma^2_kI_{n_{k}D})^{-1}C_k(\tau,t').
\end{align}
The mean and covariance functions for the posterior predictive distribution are given by:
\begin{align}
\tilde{\boldsymbol{\mu}}_k(t_{*}) &= C_k(t_{*},\tau) (C_k(\tau, \tau) + \sigma^2_kI_{n_{k}D})^{-1}X_k(\tau)  \\
\tilde{C}_k(t_*,t_*)& = C_k(t_*,t_*) + \sigma_k^2 - C_k(t_{*},\tau)(C_k(\tau, \tau) + \sigma^2_kI_{n_{k}D})^{-1}C_k(\tau,t_*).
\end{align}
\subsection{Derivation of tensor-Toeplitz decomposition inverse}
Let $J_n$ denote a $n\times n$ matrix of ones. Recall that we can write $C$ in the following form:
\begin{equation}
C = \sigma^2I_{nD} + B,
\end{equation}
where
\begin{equation}
B = J_{n} \otimes A,
\end{equation}
and we have denoted $\otimes$ as the Kronecker (tensor) product.
Let $e_n$ denote a column vector of ones of length $n$. It is easy to see that $J_{n} = e_{n}e_{n}^T$. Trivially, we can write $A = I_DA$ and this leads to the following factorisation
\begin{equation}
\begin{split}
B = &\left(e_{n}e_{n}^T\right) \otimes (I_DA). \\
= & (e_{n}\otimes I_D)(e_{n}^T \otimes A),
\end{split}
\end{equation}
where the second equality follows from the mixed-product property of the Kronecker product. Observing that  $e_{n}\otimes I_D$ is a matrix of size $nD \times D$, and $e_{n}^T \otimes A$ is matrix of size $D \times n D$. We thus arrive at the following factorisation:
\begin{equation}
C = \sigma^2I_{nD} + (e_{n}\otimes I_D)I_D(e_{n}^T \otimes A),
\end{equation}
which is in the following form
\begin{equation}
\begin{split}
C& =  M + URV \\
M& = \sigma^2I_{nD},\,\, U =(e_{n}\otimes I_D),\\
R& =   I_D,\,\, V =  e_{n}^T \otimes A.
\end{split}
\end{equation}
Matrices of this form have a simple formula for their inverse (Woodbury Identity):
\begin{equation}
(M + URV)^{-1} = M^{-1} - M^{-1}U(R^{-1} + VM^{-1}U)^{-1}VM^{-1}.
\end{equation}
In our case $R$ is trivially its own inverse and the inverse of $M$ requires only a single computation. Thus the only challenge is to invert $(R^{-1} + VM^{-1}U)$. However, consider the following computations
\begin{equation}
\begin{split}
R^{-1} + VM^{-1}U & = I_D +  (e_{n}^T \otimes A) (\sigma^{-2}I_{nD})(e_{n}\otimes I_D) \\
& = I_D + \sigma^{-2}(e_{n}^Te_{n})\otimes(AI_D)\\
& = I_D + \sigma^{-2} n \otimes A \\
& = I_D + \sigma^{-2} n A
\end{split}
\end{equation}
Recall that $A$ is a  $D \times D$ Toeplitz matrix and so it is easy to see that $R^{-1} + VM^{-1}U$ is also Toeplitz and efficient algorithms exist for inverting them. Denote this inverse by Z and so 
\begin{equation}
\begin{split}
(M + URV)^{-1}& = \sigma^{-2}I_{nD} - \sigma^{-4}(e_{n}\otimes I_D)(Z)(e_{n}^T \otimes A) .\\
& =  \sigma^{-2}I_{nD} - \sigma^{-4}(e_{n}\otimes I_D)(e_{n}^T \otimes ZA) \\
& =  \sigma^{-2}I_{nD} - \sigma^{-4}(e_{n}e_{n}^T ) \otimes (ZA) \\
& =  \sigma^{-2}I_{nD} - \sigma^{-4}J_{n} \otimes (ZA) \\
& =  \sigma^{-2}I_{nD} - \frac{1}{n\sigma^2}J_{n} \otimes (I - Z)
\end{split}
\end{equation}
where the last follows from the following computations, denoting $Z^{-1} = Q$
\begin{equation}
\begin{split}
Q & = I_D + \sigma^{-2}nA \\
Q - \sigma^{-2}nA & = I_D\\
Q^{-1}Q - \sigma^{-2}nQ^{-1}A & = Q^{-1}\\
I_D - \sigma^{-2}nQ^{-1}A & = Q^{-1} \\
I_D - Z & = \sigma^{-2}nZA \\
ZA & = \frac{(I_D - Z)\sigma^2}{n}
\end{split}
\end{equation}
Thus the inversion of $C$ requires only the inversion of a $D\times D$ matrix, which can be performed in $O(D^2)$ computations, this should be compared with a na\"ive inversion of $C$ requiring $O((nD)^3)$ computations, which represents significant savings. We also need the determinant of $C$ and the calculation is straightforward using an elementary determinant lemma.
\begin{equation}
\begin{split}
\det(C) & = \det(M + URV) \\
& = \det(R^{-1} + VM^{-1}U)\det(R)\det(M) \\
& = \det(I_D + ( (e_{n}^T \otimes A) (\sigma^{-2}I_{nD})(e_{n}\otimes I_D)))\det(M) \\
& =  (\sigma^2)^{nD} \det(I_D + \sigma^{-2} n A)
\end{split}
\end{equation}
As before the term in the determinant is Toeplitz and efficient algorithm exists for calculating this determinant.

\subsection{Matrix algorithms}
We state here the require algorithm to invert the covariance matrix $C = \sigma^2I_{nD} + J_n \otimes A$ for a Toeplitz matrix $A$. The algorithms are a minor modification of the algorithms found in \cite{Zhang:2005} to handle the Tensor product. 
\begin{algorithm}
	\caption{Tensor extended Trench algorithm}
	\label{Trench}
	\begin{algorithmic}[1] 
		\Procedure{Trench}{$C = \sigma^2I_{nD} + J_n \otimes A$} \Comment{$C^{-1}$ and $\log \det C$, for Toeplitz A}
		\State $Q \gets I_D + \sigma^{-2}n A $
		\State $q \gets Q_{1,:}^T$
		\State Input $q$ to algorithm 2, returning $v \in \mathbbm{R}^D$ and $l \in \mathbbm{R}^D $
		\State $\bar{Q}(1, 1:D) \gets v(D:1)$
		\State $\bar{Q}(1:D, 1) \gets v(D:1)$
		\State $\bar{Q}(D, 1:D) \gets v(1:D)$
		\State $\bar{Q}(1:D,D) \gets v(1:D)$
		
		\For {$i = 2: \lfloor{(D-1)/2\rfloor} + 1$}
		\For {$j = i: N - i + 1$}
		\State $\bar{Q}(i, j) \gets \bar{Q}(i - 1, j - 1) + \frac{v(D+1-j)v(D+1-i) - v(i-1)v(j-1)}{v(D)}$
		\State $\bar{Q}(j, i) \gets \bar{Q}(i,j)$
		\State $\bar{Q}(N - i + 1, N - j + 1) \gets \bar{Q}(i,j)$
		\State $\bar{Q}(N - j + 1, N - i + 1) \gets \bar{Q}(i,j)$
		\EndFor
		\EndFor
		\State $Z \gets \bar{Q}$
		\State $C^{-1} = \sigma^{-2}I_{nD} - \frac{1}{n\sigma^2}J_{n}^T \otimes (I - Z)$
		\State $\log \det C \gets nD\log(\sigma^2) + l$ 
		\EndProcedure
	\end{algorithmic}
\end{algorithm}

\begin{algorithm}
	\caption{Vector-Inverse and log-determinant algorithm}
	\label{vector-inverse}
	\begin{algorithmic}[1] 
		\Procedure{Vector-Inverse}{$q$} \Comment{$v$ and $l$ as required by algorithm 1}
		\State $\xi \gets \frac{q(2:D)}{q(1)}$
		\State Input $D-1$ and $\xi$ to algorithm 3, returning $z \in \mathbbm{R}^{D-1}$ and $l \in \mathbbm{R}^D $
		\State $l \gets l + D \log q(1)$
		\State $v(D) \gets \frac{1}{(1+\xi^Tz)q(1)}$
		\State $v(1:D-1) \gets v(D)z(D-1:1)$
		\EndProcedure
	\end{algorithmic}
\end{algorithm}

\begin{algorithm}
	\caption{extended Durbin's algorithm }
	\label{Durbin}
	\begin{algorithmic}[1] 
		\Procedure{Durbin}{$m, \xi$} \Comment{$z$ and $l$ as required by algorithm 1}
		\State $z(1) \gets - \xi(1)$
		\State $\beta \gets \alpha \gets 1$
		\State $l \gets 0$
		\For {$i =  1: m-1$}
		\State $\beta \gets (1 -\alpha^2)\beta$
		\State $l = l + \log\beta$
		\State $\alpha \gets \frac{\xi(i+1)+ \xi(i:1)^Tz(1:i)}{\beta}$
		\State $z(1:i) \gets z(1:i) + \alpha z(i:1)$
		\State $z(i+1) \gets \alpha$
		\EndFor
		\State $\beta \gets (1 -\alpha^2)\beta$ 
		\State $l \gets l + \log\beta$
		\EndProcedure
	\end{algorithmic}
\end{algorithm}

\clearpage
\subsection{Derivative of the marginal likelihood}
The derivatives of the marginal likelihood given in equation \ref{equation::GPmarginalliklihood} are given by \citep{Rasmussen:2004}
\begin{equation}\label{equation::GPMLgrad}
\frac{\partial}{\partial\theta_j}\log\left\{p(X_k|\tau,\boldsymbol{\theta_k})\right\} = \frac{1}{2}X_k(\tau)^T \hat{C}^{-1}_k\left(\frac{\partial\hat{C}_k}{\partial \theta_j}\right)\hat{C}^{-1}_k X_k(\tau) - \frac{1}{2}tr\left\{\hat{C}^{-1}_k\left(\frac{\partial \hat{C}_k}{\partial \theta_j}\right)\right\}.
\end{equation}
The partial derivatives of the covariance functions can obtained in a straightforward manner and once evaluated at observations can be structured into blocks just as in equation \ref{equation::covariancestructure}. Letting $\hat{A}_k$ be the diagonal blocks of the covariance matrix in equation \ref{equation::covariancestructure}. The corresponding diagonal blocks of derivative
are given in equation \ref{equation::GPMLpartial}. Blocks not on the diagonal are similar and do not include the derivative with respect to $\theta_3$.
\begin{equation}\label{equation::GPMLpartial}
\begin{split}
\left[\frac{\partial\hat{A}_k}{\partial \theta_1}\right]_{rs} = & a \exp\left\{\left(- \frac{(t_r - t_s)^2}{e^{\theta_1}}\right)\right\}\left(\frac{(t_r - t_s)^2}{e^{\theta_1}}\right) \\
\left[\frac{\partial\hat{A}_k}{\partial \theta_2}\right]_{rs} = & 2e^{2\theta_2}\exp\left(- \frac{(t_r - t_s)^2}{l}\right) \\
\left[\frac{\partial\hat{A}_k}{\partial \theta_3}\right]_{rs} = & 2e^{2\theta_3}\delta_{rs}.
\end{split}
\end{equation}
\subsection{Tensor decompositions for derivatives of the marginal likelihood}
In this appendix we derive formulae for the derivative of the marginal likelihood
exploiting the block structure of our matrices. We first make some preliminary manipulations. We set the following notation $\partial_{\theta_j} = \frac{\partial}{\partial_{\theta_j}}$. First we note that
\begin{equation}
\hat{C}_k^{-1}(\partial_{\theta_j}\hat{C}_k) \hat{C}_k^{-1} = - \partial_{\theta_j}\hat{C}_k^{-1}.
\end{equation}
We recall the following
\begin{equation}
\hat{C}_k^{-1} = \sigma^{-2}I_{nD} - \sigma^{-4} J_n \otimes (ZA),
\end{equation}
and hence the following is true	
\begin{equation}
\partial_{\theta_j}\hat{C}_k^{-1} = \partial_{\theta_j}(\sigma^{-2}I_{nD}) - \partial_{\theta_j}\left\{\sigma^{-4} J_n \otimes (ZA)\right\}.
\end{equation}
We then note that $\partial_{\theta_j}J_n = 0$ and so the following algebraic manipulations hold
\begin{equation}
\begin{split}
\partial_{\theta_j}\left\{J_n \otimes(ZA)\right\} = & \partial_{\theta_j}J_n \otimes (ZA) + J_n \otimes \partial_{\theta_j}(ZA) \\
& = J_n \otimes (\partial_{\theta_j}Z \cdot A + Z \cdot\partial_{\theta_j} A).
\end{split}
\end{equation}
We recall that
\begin{equation}
Z = (I_D + \sigma^{-2}nA)^{-1} = Q^{-1} 
\end{equation}
and so
\begin{equation}
\partial_{\theta_j}Z = - Q^{-1}(\partial_{\theta_j}Q) Q^{-1}.
\end{equation}
It is obvious that
\begin{equation}
\partial_{\theta_j}Q = \partial_{\theta_j}(\sigma^{-2}nA),
\end{equation}
and so 
\begin{equation}
\partial_{\theta_j}Z = - Z\partial_{\theta_j}(\sigma^{-2}nA) Z.
\end{equation}
Whence it follows that
\begin{equation}
\partial_{\theta_j}\hat{C}_k^{-1} = \partial_{\theta_j}(\sigma^{-2}I_{nD}) - \partial_{\theta_j}(\sigma^{-4}) J_n \otimes (ZA) - \sigma^{-4} \left\{- Z\partial_{\theta_j}(\sigma^{-2}nA) ZA + Z\partial_{\theta_j}A\right\}.
\end{equation}
Recall that
\begin{equation}\label{equation::GPMLpartial2}
\begin{split}
\partial_{\theta_1}A_{rs} & =  A_{rs}S_{rs} \\
\partial_{\theta_2}A_{rs} & =  2A_{rs}
\end{split}
\end{equation}
where $S_{rs} = \frac{(t_r - t_s)^2}{l}$. We now derive formulae for the derivatives of the marginal likelihood and we denote $A\odot B$ has the Hadamard (element-wise) product of matrices $A$ and $B$.

\begin{prop}
The derivative of the marginal likelihood in \ref{equation::GPMLgrad} with respect to $\theta_1$ is given by
\begin{equation}
\begin{split}
\partial\theta_1\log\left\{p(X|\tau,\boldsymbol{\theta})\right\} = \frac{1}{2}X(\tau)^T \sigma^{-4}J_n \otimes (ZASZ) X(\tau)  - \frac{1}{2} tr\left(\hat{C}_k^{-1}\partial_{\theta_1}\hat{C}_k\right),
\end{split}
\end{equation}
where
\begin{equation}
tr\left(\hat{C}_k^{-1}\partial_{\theta_1}\hat{C}_k\right) = \sigma^{-2}n\sum_{i}\left(AS\right)_{i,i} - \sigma^{-2}n\sum_{i,j}\left\{(I_D - Z) \odot (AS)\right\}_{ij}.
\end{equation}
\end{prop}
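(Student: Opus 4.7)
The plan is to start from the general gradient identity for a Gaussian marginal likelihood, namely $\partial_{\theta_j}\log p = \tfrac{1}{2}X(\tau)^T\hat{C}_k^{-1}(\partial_{\theta_j}\hat{C}_k)\hat{C}_k^{-1}X(\tau) - \tfrac{1}{2}\operatorname{tr}(\hat{C}_k^{-1}\partial_{\theta_j}\hat{C}_k)$, and then reduce each of the two terms using the tensor decomposition of $\hat{C}_k^{-1}$ established earlier in the paper. First I would observe that $\sigma^2$ does not depend on $\theta_1$, so $\partial_{\theta_1}\hat{C}_k = \partial_{\theta_1}C_k = J_n\otimes \partial_{\theta_1}A = J_n\otimes (A\odot S)$, where I interpret the expression $AS$ in the statement as the Hadamard product $A\odot S$, consistent with the element-wise formula $\partial_{\theta_1}A_{rs}=A_{rs}S_{rs}$.

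The key algebraic lever is the identity $Z = I_D - \sigma^{-2} n\,ZA$, derived immediately from $Z(I_D+\sigma^{-2}nA)=I_D$ by left-multiplying both sides by $Z$ and rearranging; this is exactly the step that was already used in the derivation of $C^{-1}$ in the supplement. I will reuse it twice to collapse nested Kronecker products. Using $\hat{C}_k^{-1} = \sigma^{-2}I_{nD} - \sigma^{-4}J_n\otimes(ZA)$ with the mixed-product property and $J_n J_n = nJ_n$, I would compute
\begin{equation*}
\hat{C}_k^{-1}\bigl(J_n\otimes(A\odot S)\bigr) = \sigma^{-2}J_n\otimes(A\odot S) - \sigma^{-4}n\,J_n\otimes\bigl(ZA(A\odot S)\bigr) = \sigma^{-2}J_n\otimes\bigl[(I_D-\sigma^{-2}nZA)(A\odot S)\bigr],
\end{equation*}
which by the identity above equals $\sigma^{-2}J_n\otimes [Z(A\odot S)]$.

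For the quadratic-form term I would right-multiply the previous expression by $\hat{C}_k^{-1}$; the same mechanism produces a factor $(I_D - \sigma^{-2}nZA)$ acting on the right of $Z(A\odot S)$, which again collapses to $Z(A\odot S)Z$ via the identity, delivering $\hat{C}_k^{-1}(\partial_{\theta_1}\hat{C}_k)\hat{C}_k^{-1} = \sigma^{-4}J_n\otimes\bigl[Z(A\odot S)Z\bigr]$. Substituting into the quadratic form recovers the first term of the proposition.

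For the trace, taking $\operatorname{tr}$ of $\hat{C}_k^{-1}\partial_{\theta_1}\hat{C}_k = \sigma^{-2}J_n\otimes[Z(A\odot S)]$ and using $\operatorname{tr}(J_n\otimes M) = \operatorname{tr}(J_n)\operatorname{tr}(M) = n\operatorname{tr}(M)$ gives $\sigma^{-2}n\operatorname{tr}\bigl(Z(A\odot S)\bigr)$. Splitting $Z = I_D - (I_D - Z)$ separates this into $\sigma^{-2}n\operatorname{tr}(A\odot S)$ minus $\sigma^{-2}n\operatorname{tr}\bigl((I_D-Z)(A\odot S)\bigr)$; since both $I_D-Z$ and $A\odot S$ are symmetric, $\operatorname{tr}(MN) = \sum_{i,j}M_{ij}N_{ij} = \sum_{i,j}(M\odot N)_{ij}$, which rewrites the second trace in the advertised Hadamard-sum form. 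The main obstacle is purely bookkeeping: correctly tracking how the $J_n\otimes(\,\cdot\,)$ factors telescope under Kronecker multiplication and spotting the two occurrences where the $Z$-identity makes the cross terms collapse; once that pattern is recognised, the rest is routine simplification.
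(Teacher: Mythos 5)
Your proposal is correct and follows essentially the same route as the paper's proof: both rest on the tensor decomposition $\hat{C}_k^{-1} = \sigma^{-2}I_{nD} - \sigma^{-4}J_n\otimes(ZA)$, the mixed-product property with $J_nJ_n = nJ_n$, the collapsing identity $I_D - \sigma^{-2}nZA = Z$, and the trace-as-Hadamard-sum fact for symmetric matrices. The only differences are cosmetic orderings of the algebra (the paper reaches the quadratic-form coefficient via $-\partial_{\theta_1}\hat{C}_k^{-1}$ and substitutes $ZA = (I_D-Z)\sigma^2/n$ at the end, whereas you multiply out $\hat{C}_k^{-1}(\partial_{\theta_1}\hat{C}_k)\hat{C}_k^{-1}$ directly and split $Z = I_D - (I_D - Z)$ inside the trace), and your reading of $AS$ as $A\odot S$ matches the paper's usage.
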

\begin{proof}
We observe the following manipulations, which follow from our preliminary manipulations
\begin{equation}
\begin{split}
\partial_{\theta_1}\hat{C}_k^{-1} & =  \sigma^{-4} (- Z(\sigma^{-2}n\partial_{\theta_1}A) ZA + Z\partial_{\theta_1}A)\\
& =  - \sigma^{-4}J_n \otimes \left\{(Z\partial_{\theta_1}A)(-\sigma^{-2}nZA + I_D)\right\}\\
& = - \sigma^{-4}J_n \otimes \left\{Z (\partial_{\theta_1}A)Z \right\}\\
& = - \sigma^{-4}J_n \otimes (ZASZ),
\end{split}
\end{equation}
where the third line follows from the second because
\begin{equation}
\begin{split}
Q & = I_D + \sigma^{-2}nA \\
Q - \sigma^{-2}nA & = I_D\\
Q^{-1}Q - \sigma^{-2}nQ^{-1}A & = Q^{-1}\\
I_D - \sigma^{-2}nQ^{-1}A & = Q^{-1}. 
\end{split}
\end{equation}
For the trace term, recall that the trace of a product of two matrices is the sum of the Hadamard product of those two matrices. That is
\begin{equation}
tr\left(\hat{C}_k^{-1}\partial_{\theta_j}\hat{C}_k\right) = \sum_{i,j}\left(\hat{C}_k^{-1} \odot\partial_{\theta_j}\hat{C}_k\right)_{i,j}.
\end{equation}
Applying the mixed product property, we see that the following manipulations hold
\begin{equation}
\begin{split}
\hat{C}_k^{-1} \odot\partial_{\theta_1}\hat{C}_k &= \left\{\sigma^{-2}I_{nD} - \sigma^{-4} J_n \otimes (ZA)\right\}\odot \left\{J_n \otimes (AS)\right\}\\
&= \sigma^{-2}I_{nD} \odot \left\{ J_n \otimes (AS)\right\} - \sigma^{-4}\left\{J_n \otimes (ZA)\right\}\odot\left\{J_n \otimes (AS)\right\}\\
&=  \sigma^{-2}I_{nD}diag(AS,AS,\dots,AS) - \sigma^{-4}\left[J_n \otimes \left\{(ZA) \odot (AS)\right\}\right].
\end{split}
\end{equation}
Hence,
\begin{equation}
tr\left(\hat{C}_k^{-1}\partial_{\theta_1}\hat{C}_k\right) = \sigma^{-2}n\sum_{i}\left(AS\right)_{i,i} - \sigma^{-4}n^2\sum_{i,j}\left\{(ZA) \odot (AS)\right\}_{ij}.
\end{equation}
Thus the derivative of the log marginal likelihood is
\begin{equation}
\begin{split}
\partial\theta_1\log\left\{p(X|\tau,\boldsymbol{\theta})\right\} = \frac{1}{2}X(\tau)^T \sigma^{-4}J_n \otimes (ZASZ) X(\tau)  - \frac{1}{2} tr\left(\hat{C}_k^{-1}\partial_{\theta_1}\hat{C}_k\right)
\end{split}
\end{equation}
Then we can substitute $ZA = (I - Z)\frac{\sigma^2}{n}$ to obtain the required result.
\end{proof}
\begin{prop}
	The derivative of the marginal likelihood in \ref{equation::GPMLgrad} with respect to $\theta_2$ is given by
\begin{equation}
\begin{split}
\partial\theta_2\log\left\{p(X|\tau,\boldsymbol{\theta})\right\} = \frac{1}{2}X(\tau)^T \sigma^{-4}J_n \otimes (2ZAZ) X(\tau)  - \frac{1}{2} tr\left(\hat{C}_k^{-1}\partial_{\theta_2}\hat{C}_k\right)
\end{split}
\end{equation}
	where
\begin{equation}
tr\left(\hat{C}_k^{-1}\partial_{\theta_2}\hat{C}_k\right) = 2\sigma^{-2}n\sum_{i}\left(A\right)_{i,i} - \sigma^{-2}n\sum_{i,j}\left\{(I- Z) \odot (2A)\right\}_{ij}.
\end{equation}
\end{prop}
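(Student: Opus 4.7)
The plan is to mirror the derivation of Proposition 1 almost verbatim, exploiting the fact that the only $\theta_j$-dependence in the chain of preliminary manipulations that precedes Proposition 1 enters through $\partial_{\theta_j} A$, $\partial_{\theta_j}(\sigma^{-2})$, and $\partial_{\theta_j}(\sigma^{-4})$. Since $\sigma^2 = e^{2\theta_3}$ does not depend on $\theta_2$, the two terms involving derivatives of $\sigma^{-2}I_{nD}$ and of $\sigma^{-4}$ both vanish, leaving only the bracketed term from the generic formula
\begin{equation*}
\partial_{\theta_j}\hat{C}_k^{-1} = \partial_{\theta_j}(\sigma^{-2}I_{nD}) - \partial_{\theta_j}(\sigma^{-4}) J_n \otimes (ZA) - \sigma^{-4} J_n \otimes \bigl\{- Z\partial_{\theta_j}(\sigma^{-2}nA) ZA + Z\partial_{\theta_j}A\bigr\}.
\end{equation*}
Substituting $\partial_{\theta_2} A = 2A$ from \eqref{equation::GPMLpartial2} and factoring $Z\partial_{\theta_2}A$ out on the left gives the inner factor $(-\sigma^{-2}nZA + I_D)$, which collapses to $Z$ by the same algebraic identity used in Proposition 1 (namely $I_D - \sigma^{-2}nZA = Z$). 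Hence $\partial_{\theta_2}\hat{C}_k^{-1} = -\sigma^{-4} J_n \otimes (2ZAZ)$, and invoking $\hat{C}_k^{-1}(\partial_{\theta_2}\hat{C}_k)\hat{C}_k^{-1} = -\partial_{\theta_2}\hat{C}_k^{-1}$ produces the quadratic form in the claimed expression.

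For the trace term, I would follow the same Hadamard-product manipulation as in Proposition 1, replacing $AS$ throughout by $2A$. Concretely, $\partial_{\theta_2}\hat{C}_k = J_n \otimes (2A)$ (no diagonal $\sigma^2$ contribution since $\theta_3$ is fixed), so the mixed-product property yields
\begin{equation*}
\hat{C}_k^{-1} \odot \partial_{\theta_2}\hat{C}_k = \sigma^{-2} I_{nD} \odot \bigl\{J_n \otimes (2A)\bigr\} - \sigma^{-4}\bigl[J_n \otimes \{(ZA)\odot(2A)\}\bigr].
\end{equation*}
Summing over all entries gives $\operatorname{tr}(\hat{C}_k^{-1}\partial_{\theta_2}\hat{C}_k) = 2\sigma^{-2}n\sum_i A_{ii} - \sigma^{-4}n^2\sum_{ij}\{(ZA)\odot(2A)\}_{ij}$, and the final substitution $ZA = (I_D - Z)\sigma^2/n$ — already verified at the end of Proposition 1 — converts the second term to $\sigma^{-2}n\sum_{ij}\{(I-Z)\odot(2A)\}_{ij}$, matching the stated formula.

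There is essentially no new mathematical content: the proof is a transcription of Proposition 1 with the single substitution $\partial_{\theta_1} A = A\odot S \mapsto \partial_{\theta_2} A = 2A$. The only step that requires any care is checking that the inner factor $-\sigma^{-2}nZA + I_D$ still simplifies to $Z$ — which it does, because that identity depends only on the definition $Z = (I_D + \sigma^{-2}nA)^{-1}$ and not on which hyperparameter we differentiate with respect to. Consequently, I would present the proof as a brief ``by the same computation as in Proposition 1, with $AS$ replaced by $2A$'' argument, writing out only the two displays above to make the substitution explicit.
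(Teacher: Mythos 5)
Your proposal is correct and takes essentially the same approach as the paper's own proof, which likewise treats Proposition 2 as the $\theta_1$ computation with $\partial_{\theta_1}A = A\odot S$ replaced by $\partial_{\theta_2}A = 2A$, using the identity $I_D - \sigma^{-2}nZA = Z$ to collapse the inner factor to obtain $\partial_{\theta_2}\hat{C}_k^{-1} = -\sigma^{-4}J_n\otimes(2ZAZ)$. The trace manipulation via the Hadamard product and the final substitution $ZA = (I-Z)\sigma^2/n$ also match the paper exactly.
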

\begin{proof}
As in the previous proposition we observe:
\begin{equation}
\begin{split}
\partial_{\theta_2}\hat{C}_k^{-1} & =  \sigma^{-4} (- Z(\sigma^{-2}n\partial_{\theta_1}A) ZA + Z\partial_{\theta_2}A)\\
& =  - \sigma^{-4}J_n \otimes \left\{(Z\partial_{\theta_2}A)(-\sigma^{-2}nZA + I)\right\}\\
& = - \sigma^{-4}J_n \otimes \left\{Z(\partial_{\theta_2}A)Z\right\}\\
& = - \sigma^{-4}J_n \otimes (2ZAZ).
\end{split}
\end{equation}	
For the trace term, as for $\theta_1$ we proceed as follows
\begin{equation}
\begin{split}
\hat{C}_k^{-1} \odot\partial_{\theta_2}\hat{C}_k &= \left\{\sigma^{-2}I_{nD} - \sigma^{-4} J_n \otimes (ZA)\right\}\odot \left\{J_n \otimes (2A)\right\}\\
&= \sigma^{-2}I_{nD} \odot \left\{J_n \otimes (2A)\right\} - \sigma^{-4}\left\{J_n \otimes (ZA)\right\}\odot\left\{J_n \otimes (2A)\right\}\\
&=  2\sigma^{-2}I_{nD}diag(A,A,\dots,A) - \sigma^{-4}\left[J_n \otimes \left\{(ZA) \odot (2A)\right\}\right].
\end{split}
\end{equation}
Hence,
\begin{equation}
tr\left(\hat{C}_k^{-1}\partial_{\theta_2}\hat{C}_k\right) = 2\sigma^{-2}n\sum_{i}\left(A\right)_{i,i} - \sigma^{-4}n^2\sum_{i,j}\left\{(ZA) \odot (2A)\right\}_{ij}.
\end{equation}
Thus the derivative of the log marginal likelihood is
\begin{equation}
\begin{split}
\partial\theta_2\log\left\{p(X|\tau,\boldsymbol{\theta})\right\} = \frac{1}{2}X(\tau)^T \sigma^{-4}J_n \otimes (2ZAZ) X(\tau)  - \frac{1}{2} tr\left(\hat{C}_k^{-1}\partial_{\theta_2}\hat{C}_k\right)
\end{split}
\end{equation}
Then we can substitute $ZA = (I - Z)\frac{\sigma^2}{n}$ to obtain the required result.
\end{proof}

\begin{prop}
The derivative of the marginal likelihood in \ref{equation::GPMLgrad} with respect to $\theta_1$ is given by
\begin{equation}
\begin{split}
\partial\theta_3\log\left\{p(X|\tau,\boldsymbol{\theta})\right\} = \sigma^{-2} \lVert X(\tau)\rVert_2^2 +  X(\tau)^T J_n \otimes \left\{\frac{(Z^2 - I)}{\sigma^2 n}\right\} X(\tau)  - \frac{1}{2} tr\left(\hat{C}_k^{-1}\partial_{\theta_3}\hat{C}_k\right),
\end{split}
\end{equation}
where
\begin{equation}
tr(\hat{C}_k^{-1}\partial_{\theta_3}\hat{C}_k) = 2nD- 2\sum_i(I - Z)_{ii}.
\end{equation}
\end{prop}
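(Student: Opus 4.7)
The plan is to specialise the general gradient formula
$\partial_{\theta_3}\log p(X|\tau,\boldsymbol{\theta}) = \frac{1}{2}X(\tau)^T \hat{C}_k^{-1}(\partial_{\theta_3}\hat{C}_k)\hat{C}_k^{-1}X(\tau) - \frac{1}{2}tr(\hat{C}_k^{-1}\partial_{\theta_3}\hat{C}_k)$
using the key observation that $\sigma^2 = \exp(2\theta_3)$ enters $\hat{C}_k = \sigma^2 I_{nD} + J_n \otimes A$ linearly and only through the identity block; since $A$ is independent of $\theta_3$, we have $\partial_{\theta_3}\hat{C}_k = 2\sigma^2 I_{nD}$. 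This immediately collapses the sandwich $\hat{C}_k^{-1}(\partial_{\theta_3}\hat{C}_k)\hat{C}_k^{-1}$ to $2\sigma^2 \hat{C}_k^{-2}$, reducing the quadratic-form computation to squaring $\hat{C}_k^{-1}$. Notably, unlike the $\theta_1$ and $\theta_2$ cases, one never has to differentiate $Z$ with respect to $\theta_3$ explicitly.

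For the squaring step, I would start from the compact inverse $\hat{C}_k^{-1} = \sigma^{-2}I_{nD} - (\sigma^2 n)^{-1}J_n \otimes (I-Z)$ already derived in the tensor-decomposition section. Expanding the square, the cross terms contribute $-2(\sigma^4 n)^{-1}J_n \otimes (I-Z)$, and the pure tensor term uses the mixed-product property together with $J_n^2 = nJ_n$ to give $(J_n \otimes (I-Z))^2 = nJ_n \otimes (I-Z)^2$. Collecting yields
$\hat{C}_k^{-2} = \sigma^{-4}I_{nD} + (\sigma^4 n)^{-1}J_n \otimes [(I-Z)^2 - 2(I-Z)]$.
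The single algebraic simplification $(I-Z)^2 - 2(I-Z) = Z^2 - I$ then produces exactly the tensor factor in the claim. Multiplying by $2\sigma^2$ (from $\partial_{\theta_3}\hat{C}_k$) and by the prefactor $\tfrac{1}{2}$ delivers the two stated contributions $\sigma^{-2}\lVert X(\tau)\rVert_2^2$ and $X(\tau)^T J_n \otimes \tfrac{Z^2 - I}{\sigma^2 n} X(\tau)$.

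For the trace term I would apply $tr(A \otimes B) = tr(A)\,tr(B)$ together with $tr(J_n) = n$ to the same formula for $\hat{C}_k^{-1}$, giving $tr(\hat{C}_k^{-1}) = nD\sigma^{-2} - \sigma^{-2}tr(I-Z)$; multiplying by $2\sigma^2$ then matches the stated $2nD - 2\sum_i(I-Z)_{ii}$. The only mildly delicate step in the whole proof is the bookkeeping when squaring $\hat{C}_k^{-1}$ and recognising the collapse $(I-Z)^2 - 2(I-Z) = Z^2 - I$; everything else is routine use of the mixed-product property and the substitution $ZA = (I-Z)\sigma^2/n$ already exploited in the two preceding propositions, which is what makes this case notably shorter than the $\theta_1$ and $\theta_2$ derivatives.
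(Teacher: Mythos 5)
Your proposal is correct and follows essentially the same route as the paper: both exploit that $\partial_{\theta_3}\hat{C}_k = 2\sigma^2 I_{nD}$ commutes so the sandwich reduces to $2\sigma^2\hat{C}_k^{-2}$, then square the tensor form of the inverse using the mixed-product property and $J_nJ_n = nJ_n$, and read off the trace from the Kronecker structure. The only (immaterial) difference is that you square the already-substituted form $\sigma^{-2}I_{nD} - (\sigma^2 n)^{-1}J_n\otimes(I-Z)$ and simplify via $(I-Z)^2 - 2(I-Z) = Z^2 - I$, whereas the paper squares the $J_n\otimes(ZA)$ form and invokes $n\sigma^{-2}ZA = I_D - Z$ midway; both land on the same expression $\sigma^{-4}I_{nD} + (\sigma^4 n)^{-1}J_n\otimes(Z^2 - I)$.
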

\begin{proof}
We note that $\partial_{\theta_3}\hat{C}_k = 2\sigma^2I_{nD}$ is a scalar multiple of the identity matrix and thus commutes. Hence, we need only compute $\hat{C}_k^{-1}\hat{C}_k^{-1}$ and the trace term. Note the following algebraic manipulations:
\begin{equation}
\begin{split}
\hat{C}_k^{-1}\hat{C}_k^{-1} &  = \left\{\sigma^{-2}I_{nD} - \sigma^{-4} J_n \otimes (ZA)\right\}\left\{\sigma^{-2}I_{nD} - \sigma^{-4} J_n \otimes (ZA)\right\}\\
& = \sigma^{-4}I_{nD}  - 2\sigma^{-6}J_n \otimes (ZA) + \sigma^{-8} (J_nJ_n)\otimes(ZAZA)\\
& = \sigma^{-4}I_{nD}  - 2\sigma^{-6}J_n \otimes (ZA) + n\sigma^{-8} (J_n)\otimes(ZAZA)\\
& = \sigma^{-4}I_{nD}  + J_n \otimes ( - 2\sigma^{-6} ZA + n\sigma^{-8}ZAZA)\\
& = \sigma^{-4}I_{nD}  + J_n \otimes \left\{ \sigma^{-6}(-2I_D + n\sigma^{-2}ZA)ZA\right\}\\
& = \sigma^{-4}I_{nD}  + J_n \otimes \left\{ \sigma^{-6}(-2I_D + I_D - Z)ZA\right\}
\\
& = \sigma^{-4}I_{nD}  + J_n \otimes \left\{ -\sigma^{-6}(I_D + Z)ZA\right\}\\
& = \sigma^{-4}I_{nD}  + J_n \otimes \left\{ -\sigma^{-4}(I_D -Z^2)/n\right\}.
\end{split}
\end{equation}
The compute the trace we note that the following follows directly from the tensor decomposition of $\hat{C}_k^{-1}$:
\begin{equation}
tr(\hat{C}_k^{-1}) = nD\sigma^{-2}- \sigma^{-4}n \sum_i(ZA)_{ii} = nD\sigma^{-2}- \sigma^{-2} \sum_i(I - Z)_{ii}.
\end{equation}
Substituting the formulae shows the desired result it now clear.
\end{proof}
In practice, we never need to compute or even store the full $nD \times nD$ inverse matrix $C^{-1}$, since we can only need to keep track of summaries of the data matrix rather than the full data matrix itself. This is demonstrated in the following proposition.
\begin{prop}
	Let
	\begin{equation*}
	X=
	\begin{bmatrix}
	x_{11} & x_{12} & x_{13} & \dots  & x_{1n} \\
	x_{21} & x_{22} & x_{23} & \dots  & x_{2n} \\
	\vdots & \vdots & \vdots & \ddots & \vdots \\
	x_{D1} & x_{D2} & x_{D3} & \dots  & x_{Dn}
	\end{bmatrix},
	\end{equation*}
	be a $D \times n$ matrix. Let $Y_i = \sum_j X_{i,j}$ be the sum of the $i^{th}$ row of X and written concisely $Y = Xe_n$, where $e_n$ is a $n\times 1$ vector of ones. We write $J_n$ to be the $n\times n$ matrix of ones. Let $R$ be any $D\times D$ matrix. Then the following holds
	\begin{equation}
	vec(X)^T (J_n \otimes R) vec(X) = Y R Y,
	\end{equation}
	where $vec(X)$ denotes the vectorisation of $X$; that is, the $Dn \times 1 $ vector formed by stacking columns of $X$.
\end{prop}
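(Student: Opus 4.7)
The plan is to reduce the quadratic form to a simple matrix product by exploiting two standard identities: the vectorization/Kronecker identity $(B^T \otimes A)\,\mathrm{vec}(X) = \mathrm{vec}(AXB)$, and the trace identity $\mathrm{vec}(A)^T\mathrm{vec}(B) = \mathrm{tr}(A^T B)$. The key observation that makes this collapse nicely is the rank-one factorisation $J_n = e_n e_n^T$, which immediately converts the awkward $n \times n$ middle factor into the column-sum vector $Y = Xe_n$.

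First I would apply the vectorization identity, using that $J_n$ is symmetric, to obtain
\begin{equation*}
(J_n \otimes R)\,\mathrm{vec}(X) \;=\; \mathrm{vec}(R X J_n).
\end{equation*}
Then I would substitute $J_n = e_n e_n^T$ to simplify $X J_n = X e_n e_n^T = Y e_n^T$, so that $R X J_n = (RY)\,e_n^T$, a $D \times n$ matrix all of whose columns equal $RY$.

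Next I would use the trace identity to rewrite the left-hand side as
\begin{equation*}
\mathrm{vec}(X)^T\,\mathrm{vec}\bigl((RY)\,e_n^T\bigr) \;=\; \mathrm{tr}\bigl(X^T (RY)\,e_n^T\bigr),
\end{equation*}
and apply the cyclic property of the trace to move $e_n^T$ to the front, giving $e_n^T X^T (RY) = (X e_n)^T R Y = Y^T R Y$, which matches the statement (interpreting $YRY$ as $Y^T RY$, which is forced by the dimensions $Y \in \mathbb{R}^{D \times 1}$ and $R \in \mathbb{R}^{D \times D}$).

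There is essentially no obstacle beyond bookkeeping: the identity is a two-line consequence of the standard Kronecker/vec machinery together with the rank-one structure of $J_n$. The only point requiring care is orientation — tracking which side of $X$ the factor $J_n$ multiplies, and consequently applying $J_n = e_n e_n^T$ so that $Y = X e_n$ (rather than $e_n^T X$) appears. The practical payoff is that in implementation one never stores $\mathrm{vec}(X)$ or the $nD \times nD$ matrix $J_n \otimes R$; it suffices to precompute the $D$-dimensional vector $Y$ and evaluate the small quadratic form $Y^T R Y$.
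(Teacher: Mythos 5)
Your proof is correct and follows essentially the same route as the paper's: both apply the identity $(J_n\otimes R)\,\mathrm{vec}(X)=\mathrm{vec}(RXJ_n)$ and then exploit the rank-one factorisation $J_n=e_ne_n^T$ to collapse $XJ_n$ to $Ye_n^T$. The only cosmetic difference is in the last step, where you invoke the trace identity and cyclicity while the paper applies the vec/Kronecker identity a second time to write $\mathrm{vec}(RYe_n^T)=(e_n\otimes R)Y$; both yield $Y^TRY$.
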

\begin{proof}
	Firstly, observe the following standard algebraic manipulations
	\begin{equation}
	\begin{split}
	(J_n \otimes R) vec(X) & = vec(RXJ_n)\\
	& = vec(RXe_ne_n^T)\\
	& = vec(RYe_n^T) \\
	& = (e_n \otimes R) vec(Y) \\
	& = (e_n \otimes R) Y.
	\end{split}
	\end{equation}
	Thus, using the above, it follows that
	\begin{equation}
	\begin{split}
	vec(X)^T (J_n \otimes R) vec(X) & = vec(X)^T (e_n \otimes R) Y\\
	& = vec(R^TXe_n)^T Y\\
	& = vec(R^TY)^T Y \\
	& = (R^TY)^TY \\
	& = Y^TRY,
	\end{split}
	\end{equation}
	as required. 
\end{proof}

\subsection{Hamiltonian Monte-Carlo for GP hyperparameters}
The Hamiltonian can be decomposed into potential and kinetic energies $H(\bm{x},\bm{p}) = U(\bm{x}) + K(\bm{p})$. The canonical distribution is then given by:
\begin{equation}
p(\bm{x},\bm{p}) \propto \exp(-H(\bm{x},\bm{p})) \propto p(\bm{x})p(\bm{p}).
\end{equation}
The distribution of momentum component is chosen as a Gaussian distribution with diagonal covariance matrix $M = diag(m_1,...,m_r)$ and thus the distribution and kinetic energies are given by
\begin{equation}
\begin{split}
p(\bm{p}) &= N(0,M)\\
K(\bm{p}) & = \frac{\bm{p}M^{-1}\bm{p}^T}{2} \\
\nabla K & = M^{-1}\bm{p}.
\end{split}
\end{equation}
It is easy to see from the canonical distribution that $U(\bm{x}) = - \log(p(\bm{x}))$ is the required choice for the potential. In practice, we need to simulate from Hamiltonian dynamics. Hamilton's equations are given by a coupled system:
\begin{equation}
\begin{split}
\frac{d\bm{p}}{dt}& = -\nabla_{\bm{x}} H(\bm{x},\bm{p})\\
\frac{d\bm{x}}{dt}& = \nabla_{\bm{p}} H(\bm{x},\bm{p}).
\end{split}
\end{equation}
Such a system is called symplectic and thus a numerical schema which is a symplectic integrator is required to simulate the required dynamics \citep{Neal:2011}. The leapfrog algorithm is the standard choice \citep{Mackay:2003}. This algorithm does not exactly conserve energy and so a Metropolis accept/reject step is required is remove the induced bias \citep{Beskos:2013}. An MCMC algorithm can then be constructed to sample from the required distribution, where proposals are made using Hamiltonian evolutions. Recall, we are required to simulate the Hamiltonian evolutions. To simulate an evolution over time $T$, take $L$ steps of size $\delta$ such that $L\delta \geq T$. One step of the leapfrog algorithm of size $\delta$ for Hamilton's dynamics starting at time $t$ is given by the following
\begin{equation}\label{leapfrog}
\begin{split}
\bm{p}(t+\delta/2) &= \bm{p}(t) - \frac{\delta}{2} \nabla U \bm{x}(t)\\
\bm{x}(t+\delta) & = \bm{x}(t) + \delta \nabla K \bm{p}(t+\delta/2) \\
\bm{p}(t+\delta) &= \bm{p}(t + \delta/2) - \frac{\delta}{2} \nabla U \bm{x}(t+ \delta)
\end{split}
\end{equation}
We can now summarise the HMC algorithm to sample $n$ samples from a target distribution $p(\bm{x})$.
\begin{enumerate}
	\item Set $t = 0$
	\item Sample a position value from the prior $\bm{x}_0 \sim p_0$
	\item Do until $t = n$
	\begin{enumerate}
		\item Set $t = t + 1$
		\item Sample an initial momentum variable $\bm{p}_0 \sim p(\bm{p})$
		\item Set $\bm{x}_0 = \bm{x}_{t-1}$
		\item Run algorithm \ref{leapfrog} for $L$ step of size $\delta$ and obtain proposal states $\bm{x}_{*}$ and $\bm{p}_{*}$
		\item Compute the Metropolis ratio 
		\begin{equation}
		\Lambda = \exp(- (U(\bm{x}_{*})+K(\bm{p}_{*})) + (U(\bm{x}_{0})+K(\bm{p}_{0})))
		\end{equation}
		\item Sample $u \sim U[0,1]$ if $\Lambda > u$ set $\bm{x}_t = \bm{x}_{*} $, else $\bm{x}_t = \bm{x}_{t-1} $
	\end{enumerate}
\end{enumerate}
We can now specify the details for sampling the hyperparameters of a Gaussian Process with standard normal hyperpriors. Using a squared exponential covariance function and re-parametrising, as before, we first specify our target distribution $p(\bm{x}) = p(\boldsymbol{\theta}|X(\tau)) \propto p(X(\tau)|\boldsymbol{\theta})p_0(\boldsymbol{\theta})$. Now considering \begin{equation}
U(\bm{x}) = - \log(p(\bm{x})) = - \log(p(X(\tau)|\boldsymbol{\theta})) - \log(p_0(\boldsymbol{\theta})) + constant,
\end{equation}
the first term can be computed by marginalising and is recognised as the marginal likelihood given in equation $\ref{equation::GPmarginalliklihood}$. Recalling that we have a standard normal prior the negative log prior and its gradient is given by is given by
\begin{equation}
\begin{split}
- \log(p_0(\boldsymbol{\theta})) & = \frac{3}{2}\log((2\pi)) + \frac{\bm{\theta} \bm{\theta} ^T}{2}\\
\nabla(- \log(p_0(\boldsymbol{\theta})) ) &=\bm{\theta} 
\end{split}
\end{equation}
where $\bm{\theta}  = (\theta_1, \theta_2, \theta_3).$ Hence, we can write down the gradient of the potential energy using the above and equation $\ref{equation::GPMLgrad}$. We further reintroduce the dependence on $k$,
\begin{equation}
\begin{split}
\nabla U(\bm{x}) = \nabla (-\log(p(\bm{x}))) = &\frac{1}{2}tr\left(\left(\hat{C}_k^{-1} -\alpha\alpha^T \right)\nabla \hat{C}_k\right)  + \bm{x} \\
\alpha & = \hat{C}_k^{-1}X_k(\tau)
\end{split}
\end{equation}
We recall that $\nabla \hat{C}_k$ can be computed from equation \ref{equation::GPMLpartial}. Thus we have everything we need to simulate Hamiltonian dynamics to explore our target distribution. In practice, we make a few standard adaptations to the above algorithm as detailed in \citep{Neal:2011}. We sample $\delta$ from a uniform distribution on $\mathcal{U}\left[a,b\right]$, as well as using a partial momentum refreshment with parameter $\alpha$. More specifically, given $\bm{p}$ from the previous iteration of the HMC algorithm and a sample $n \sim p_0(\bm{p})$ set $\bm{p}'$ as
\begin{equation}
\bm{p}' = \alpha \bm{p} + (1-\alpha^2)^{1/2}n.
\end{equation}
\subsection{Assessing Convergence}
To assess convergence of our MCMC algorithms we visualise trace plots of our chains. In addition, we run two chains in parallel and we look at the potential scale reduction factors (the $\hat{R}$ statistic) and their upper $95\%$ confidence limits \citep{Gelman:1992,Brooks:1998} using the coda R package \citep{CODA}. We note that values of the $\hat{R}$ statistics far from $1$ indicate non-convergence. For the \textit{Drosophila} example we run our a Hamiltonian-within-Gibbs sampler for $20,000$ iterations, performing a Hamiltonian Monte Carlo move to update the hyperparameters every $50$ iterations. We monitor the hyperparameters of the GPs carefully, since correlations can lead to slow exploration. As a representative example, we plot, in figure \ref{figure:parallelchainsTan}, the two parallel chains for the length-scale of GP associated to the ER. We see that mixing is rapidly achieved and that the upper $95\%$ confidence limit of $\hat{R} \approx 1$, indicating convergence. We contrast this with using a Metropolis-within-Gibbs sampler, in which we perform and Metropolis-Hastings move every $10$ iterations. We see from figure \ref{figure:parallelchainsMHtan} the random walk nature of the parameters. In this case the upper $95\%$ confidence limit of $\hat{R} \approx 1.05$, thus our chain has most likely converged but exploration of the probability space is clearly slow. We also asses convergence using parallel chains in the set-up of section \ref{section:mouse}. We monitor $\epsilon$ the mixing weight of the outlier component, as an example, which can be seen in figure \ref{figure:parallelchainsAndy}. The upper $95\%$ confidence limit of $\hat{R} \approx 1.01$, indicating convergence.
\begin{figure}[h]
\centering
\includegraphics[width = 10cm]{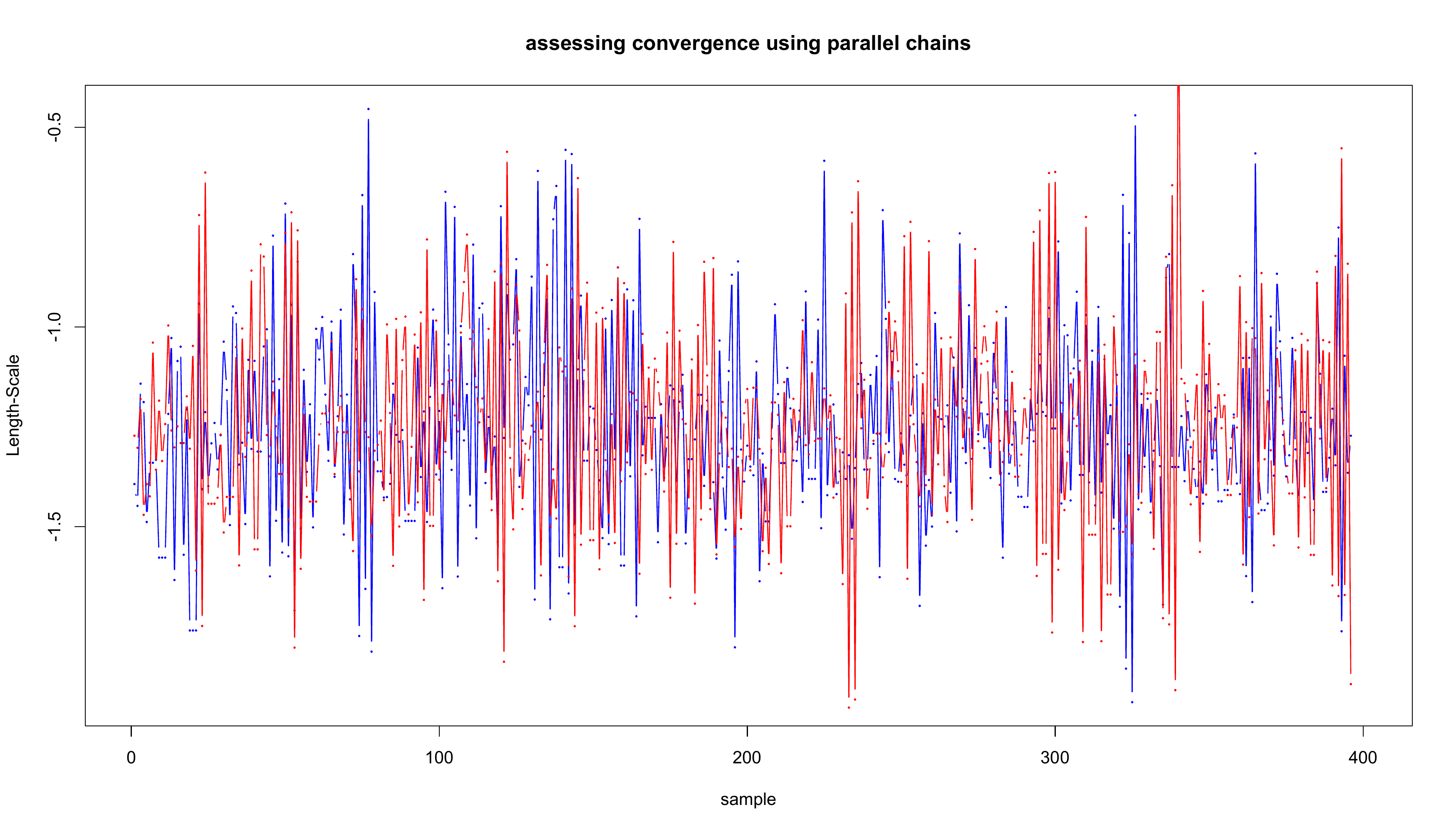}
\caption{Two parallel chains for the length-scale of the Gaussian process regression model associated to the endoplasmic reticulum, using a Hamiltonian-within-Gibbs sampler}
\label{figure:parallelchainsTan}
\end{figure}

\begin{figure}[h]
	\centering
	\includegraphics[width = 10cm]{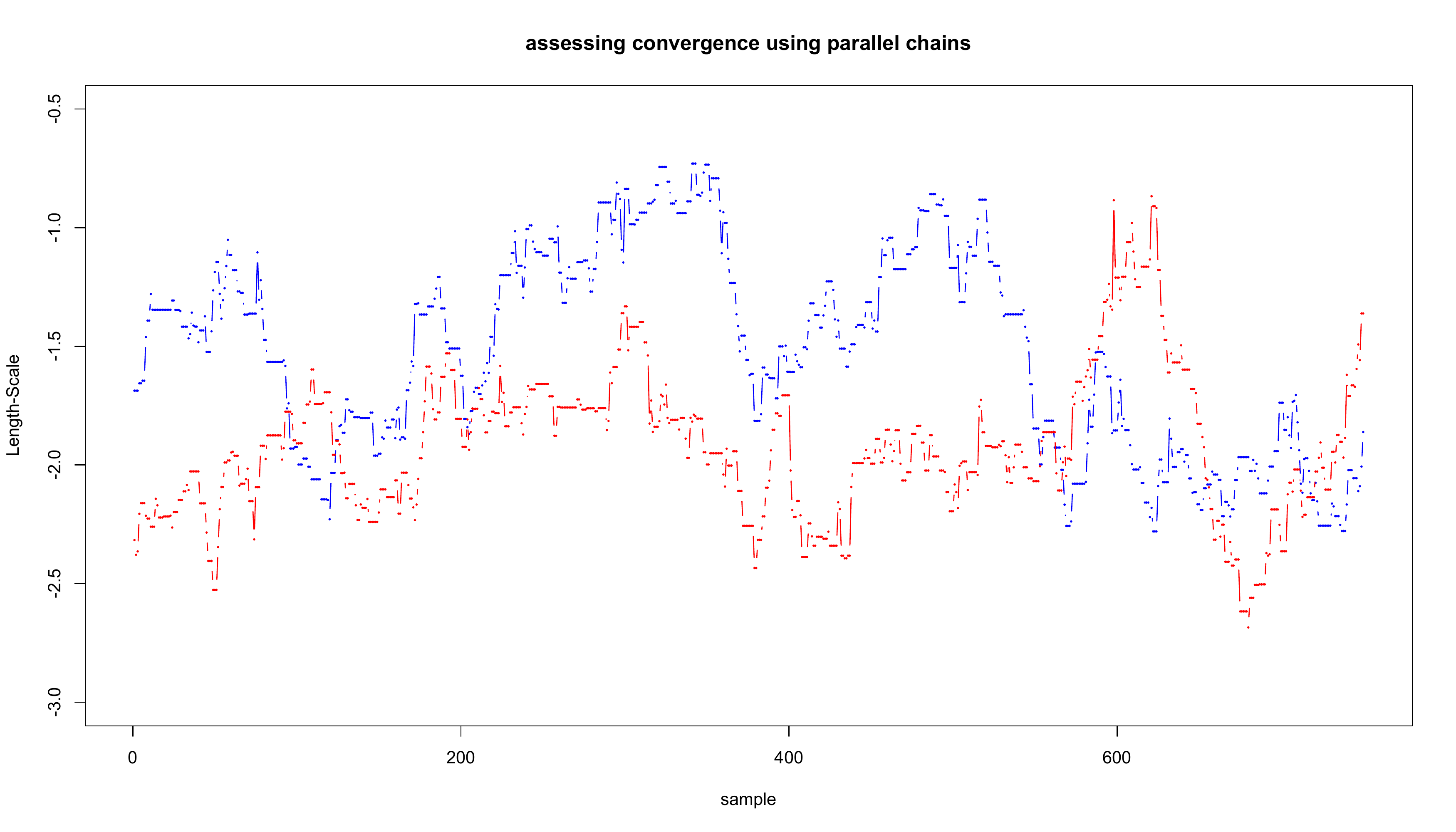}
	\caption{Two parallel chains for the length-scale of the Gaussian process regression model associated to the endoplasmic reticulum, using a Metropolis-within-Gibbs sampler}
	\label{figure:parallelchainsMHtan}
\end{figure}
\begin{figure}[h]
\centering
\includegraphics[width = 10cm]{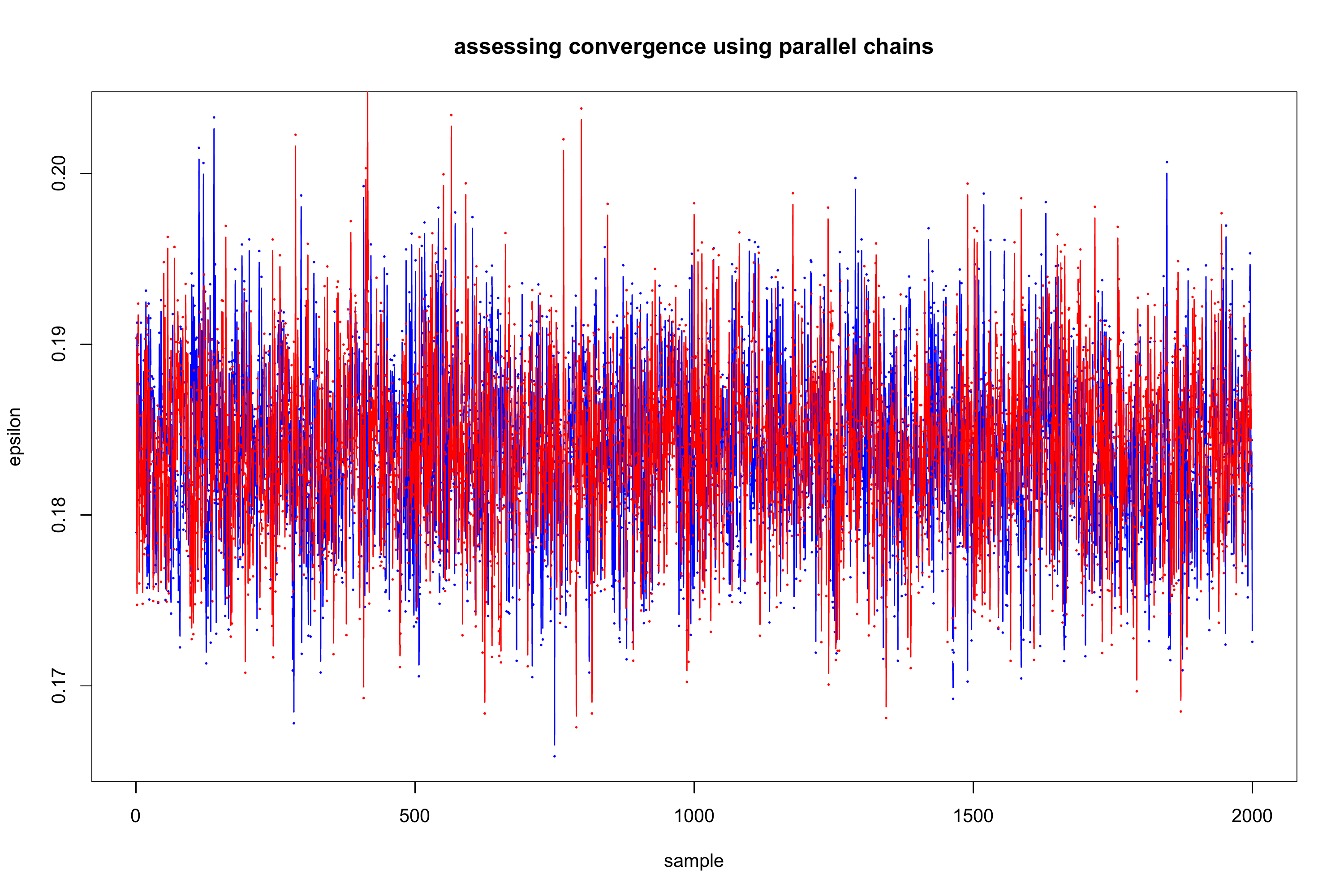}
\caption{Two parallel chains for $\epsilon$, the weight of the outlier component in the mixture model}
\label{figure:parallelchainsAndy}
\end{figure}
\clearpage
\subsection{Tables of hyperparameters}
Tables of hyperparameters and hyperparameter distributions for the mouse pluripotent stem cell data.
\bigskip
\begin{table}
	\caption{A table of log hyperparameters for a GP found by optimising the
	marginal likelihood using L-BFGS}		\label{table:1}
	\centering
	\fbox{%
	\begin{tabular}{c c c c }
		\hline
		Sub-cellular niche & Length-scale & Amplitude & Noise \\ [0.5ex]
		\hline
		40S Ribosome & 0.81 & -2.45 & -4.23 \\ 
		60S Ribosome & 0.61 & -2.90 & -4.28 \\ 
		Actin cytoskeleton & 0.44 & -2.67 & -3.77 \\ 
		Cytosol & 0.80 & -2.17 & -3.66 \\ 
		ER/Golgi apparatus & 0.96 & -2.60 & -3.82 \\ 
		Endosome & 0.48 & -2.48 & -3.49 \\ 
		Extracellular matrix & 0.53 & -2.74 & -4.06 \\ 
		Lysosome & 0.64 & -2.43 & -4.03 \\ 
		Mitochondrion & 0.55 & -2.26 & -3.77 \\ 
		Nucleus - Chromatin & 0.46 & -2.23 & -3.71 \\ 
		Nucleus - Non-chromatin & 0.23 & -2.25 & -3.47 \\ 
		Peroxisome & 0.78 & -2.40 & -3.78 \\ 
		Plasma membrane & 0.28 & -2.41 & -3.92 \\ 
		Proteasome & 0.70 & -2.01 & -4.16 \\
		\hline
\end{tabular}}
\end{table}

\begin{table}
	\caption{A table of log GP hyperparameters with $95\%$ equi-tailed credible intervals summarised from samples produced using HMC}
	\centering
	\fbox{%
\begin{tabular}{c c c c }
	\hline
	& Length-scale & Amplitude & Noise \\ [0.5ex]
	\hline
	40S Ribosome & $0.54 \left[-0.64, 1.08\right]$ & $-2.39 \left[-2.74, -2.01\right]$  &$ -4.23 \left[-4.29, -4.17\right]$ \\ 
	60S Ribosome &$ 0.51 \left[-0.20, 0.93\right]$ & $-2.77\left[-3.18, -2.31\right]$ & $-4.28 \left[-4.31, -4.23 \right]$ \\ 
	Actin cytoskeleton & $0.33 \left[-0.52, 0.81\right]$ & $-2.55\left[-2.89, -2.20\right]$ & $-3.76\left[-3.84, -3.68\right]$ \\ 
	Cytosol & $0.69 \left[- 0.01, 1.11\right]$ & $-2.04 \left[-2.43, - 1.60\right]$ & $-3.66 \left[-3.70, - 3.61\right]$ \\ 
	ER/Golgi apparatus & $0.89 \left[0.29, 1.37\right] $ & $-2.53 \left[ -2.90, - 1.89\right]$ & $-3.82 \left[-3.85, -3.79\right]$ \\ 
	Endosome & $0.39 \left[-0.24, 0.84\right]$ & $-2.37 \left[-2.68, - 1.92\right]$ & $-3.48 \left[-3.58, - 3.39\right]$ \\ 
	Extracellular matrix & $0.37 [-0.32,0.92]$ & $-2.65 [-2.97,-2.24]$ & $-4.05 \left[-4.14, -3.96\right]$ \\ 
	Lysosome & $0.54 \left[-0.31, 0.94\right]$ & $-2.36 \left[-2.69, -2.00\right]$ & $-4.03 \left[ -4.09, - 3.98\right]$ \\ 
	Mitochondrion & $0.53 \left[0.12, 0.95\right]$ & $-2.12 \left[-2.38, -1.80\right]$ & $-3.77 \left[ -3.78, -3.75\right]$ \\ 
	Nucleus - Chromatin & $0.46 \left[0.05,0.86\right]$ & $-2.14 \left[-2.45, -1.81\right]$ & $-3.71 \left[-3.75, -3.68\right]$ \\ 
	Nucleus - Non-chromatin & $0.05\left[-1.19, 0.69\right]$ & $-2.09 \left[-2.48, -1.71\right]$ & $-3.47 \left[ -3.50, -3.44\right]$ \\ 
	Peroxisome & $0.75 \left[0.28, 1.17\right]$ & $-2.31\left[-2.62, -1.92\right] $ & $-3.78 \left[-3.85, -.3.69\right]$ \\ 
	Plasma membrane & $0.02 \left[-1.03, 0.67\right]$ & $-2.32\left[-2.65, -1.91\right]$ & $-3.91 \left[-3.95, - 3.86\right]$ \\ 
	Proteasome & $0.59 \left[0.16. 0.97\right] $ & $-1.94 \left[-2.26, - 1.52\right]$ & $-4.15 \left[-4.21, - 4.10\right]$ \\ 
	\hline
\end{tabular}}
\end{table}
\newpage
\subsection{Prior Specifications}
 The priors for the comparison between classifiers in section \ref{section:performance} are as follows. The normal-inverse-Wishart prior for the multivariate Gaussian distributions was the following: the mean was set as the empirical mean of whole data, the shrinkage was set to $0.01$, the degrees of freedom was set to be the number of variables plus 2, the scale matrix was set to the identity matrix. For the GP we placed standard normal prior on each log hyperparameter. For all methods the Beta prior for the outlier component prior weight was set to be $\mathcal{B}(2,10)$ and the mixing proportions for each component was given symmetric Dirichlet prior with $\alpha = 1$.

\newpage
\bibliographystyle{apalike}
\bibliography{GPBayesProt}

\begin{thebibliography}{}

\bibitem[Beltran et~al., 2016]{Beltran:2016}
Beltran, P. M.~J., Mathias, R.~A., and Cristea, I.~M. (2016).
\newblock A portrait of the human organelle proteome in space and time during
  cytomegalovirus infection.
\newblock {\em Cell systems}, 3(4):361--373.

\bibitem[Beskos et~al., 2013]{Beskos:2013}
Beskos, A., Pillai, N., Roberts, G., Sanz-Serna, J.-M., and Stuart, A. (2013).
\newblock Optimal tuning of the hybrid monte carlo algorithm.
\newblock {\em Bernoulli}, 19(5A):1501--1534.

\bibitem[Breckels et~al., 2013]{Breckels:2013}
Breckels, L.~M., Gatto, L., Christoforou, A., Groen, A.~J., Lilley, K.~S., and
  Trotter, M.~W. (2013).
\newblock The effect of organelle discovery upon sub-cellular protein
  localisation.
\newblock {\em Journal of proteomics}, 88:129--140.

\bibitem[Breckels et~al., 2016a]{Breckels:2016}
Breckels, L.~M., Holden, S.~B., Wojnar, D., Mulvey, C.~M., Christoforou, A.,
  Groen, A., Trotter, M.~W., Kohlbacher, O., Lilley, K.~S., and Gatto, L.
  (2016a).
\newblock Learning from heterogeneous data sources: an application in spatial
  proteomics.
\newblock {\em PLoS computational biology}, 12(5):e1004920.

\bibitem[Breckels et~al., 2016b]{ghrepo}
Breckels, L.~M., Mulvey, C.~M., Lilley, K.~S., and Gatto, L. (2016b).
\newblock A bioconductor workflow for processing and analysing spatial
  proteomics data.
\newblock {\em F1000Research}, 5.

\bibitem[Brooks and Gelman, 1998]{Brooks:1998}
Brooks, S.~P. and Gelman, A. (1998).
\newblock General methods for monitoring convergence of iterative simulations.
\newblock {\em Journal of computational and graphical statistics},
  7(4):434--455.

\bibitem[Casella and Robert, 1996]{Casella:1996}
Casella, G. and Robert, C.~P. (1996).
\newblock Rao-blackwellisation of sampling schemes.
\newblock {\em Biometrika}, 83(1):81--94.

\bibitem[Christoforou et~al., 2016]{hyper}
Christoforou, A., Mulvey, C.~M., Breckels, L.~M., Geladaki, A., Hurrell, T.,
  Hayward, P.~C., Naake, T., Gatto, L., Viner, R., Arias, A.~M., et~al. (2016).
\newblock A draft map of the mouse pluripotent stem cell spatial proteome.
\newblock {\em Nature communications}, 7:9992.

\bibitem[Cody et~al., 2013]{Cody:2013}
Cody, N.~A., Iampietro, C., and L{\'e}cuyer, E. (2013).
\newblock The many functions of mrna localization during normal development and
  disease: from pillar to post.
\newblock {\em Wiley Interdisciplinary Reviews: Developmental Biology},
  2(6):781--796.

\bibitem[Cook and Cristea, 2019]{Cook:2019}
Cook, K.~C. and Cristea, I.~M. (2019).
\newblock Location is everything: protein translocations as a viral infection
  strategy.
\newblock {\em Current opinion in chemical biology}, 48:34--43.

\bibitem[Cooke et~al., 2011]{Cooke:2011}
Cooke, E.~J., Savage, R.~S., Kirk, P.~D., Darkins, R., and Wild, D.~L. (2011).
\newblock Bayesian hierarchical clustering for microarray time series data with
  replicates and outlier measurements.
\newblock {\em BMC bioinformatics}, 12(1):399.

\bibitem[Coretto and Hennig, 2016]{Coretto::2016}
Coretto, P. and Hennig, C. (2016).
\newblock Robust improper maximum likelihood: tuning, computation, and a
  comparison with other methods for robust gaussian clustering.
\newblock {\em Journal of the American Statistical Association},
  111(516):1648--1659.

\bibitem[Crook et~al., 2018]{Crook::2018}
Crook, O.~M., Mulvey, C.~M., Kirk, P. D.~W., Lilley, K.~S., and Gatto, L.
  (2018).
\newblock A bayesian mixture modelling approach for spatial proteomics.
\newblock {\em PLOS Computational Biology}, 14(11):1--29.

\bibitem[Davies et~al., 2018]{Davies:2018}
Davies, A.~K., Itzhak, D.~N., Edgar, J.~R., Archuleta, T.~L., Hirst, J.,
  Jackson, L.~P., Robinson, M.~S., and Borner, G.~H. (2018).
\newblock Ap-4 vesicles contribute to spatial control of autophagy via
  rusc-dependent peripheral delivery of atg9a.
\newblock {\em Nature Communications}, 9:3958.

\bibitem[De~Duve and Beaufay, 1981]{DeDuve:1981}
De~Duve, C. and Beaufay, H. (1981).
\newblock A short history of tissue fractionation.
\newblock {\em The Journal of cell biology}, 91(3):293.

\bibitem[De~Matteis and Luini, 2011]{De:2011}
De~Matteis, M.~A. and Luini, A. (2011).
\newblock Mendelian disorders of membrane trafficking.
\newblock {\em New England Journal of Medicine}, 365(10):927--938.

\bibitem[Duane et~al., 1987]{Duane:1987}
Duane, S., Kennedy, A.~D., Pendleton, B.~J., and Roweth, D. (1987).
\newblock Hybrid monte carlo.
\newblock {\em Physics letters B}, 195(2):216--222.

\bibitem[Dunkley et~al., 2006]{Dunkley:2006}
Dunkley, T.~P., Hester, S., Shadforth, I.~P., Runions, J., Weimar, T., Hanton,
  S.~L., Griffin, J.~L., Bessant, C., Brandizzi, F., Hawes, C., et~al. (2006).
\newblock Mapping the arabidopsis organelle proteome.
\newblock {\em Proceedings of the National Academy of Sciences},
  103(17):6518--6523.

\bibitem[Dunkley et~al., 2004]{Dunkley:2004}
Dunkley, T.~P., Watson, R., Griffin, J.~L., Dupree, P., and Lilley, K.~S.
  (2004).
\newblock Localization of organelle proteins by isotope tagging (lopit).
\newblock {\em Molecular \& Cellular Proteomics}, 3(11):1128--1134.

\bibitem[Durbin, 1960]{Durbin:1960}
Durbin, J. (1960).
\newblock The fitting of time-series models.
\newblock {\em Revue de l'Institut International de Statistique}, pages
  233--244.

\bibitem[Eddelbuettel and Francois, 2011]{Rcpp:2011}
Eddelbuettel, D. and Francois, R. (2011).
\newblock Rcpp: Seamless r and c++ integration.
\newblock {\em Journal of Statistical Software, Articles}, 40(8):1--18.

\bibitem[Eddelbuettel and Sanderson, 2014]{arma:2014}
Eddelbuettel, D. and Sanderson, C. (2014).
\newblock Rcpparmadillo: Accelerating r with high-performance c++ linear
  algebra.
\newblock {\em Comput. Stat. Data Anal.}, 71:1054--1063.

\bibitem[Escobar and West, 1995]{Escobar:1995}
Escobar, M.~D. and West, M. (1995).
\newblock Bayesian density estimation and inference using mixtures.
\newblock {\em Journal of the american statistical association},
  90(430):577--588.

\bibitem[Fraley and Raftery, 2007]{Fraley:2007}
Fraley, C. and Raftery, A.~E. (2007).
\newblock Bayesian regularization for normal mixture estimation and model-based
  clustering.
\newblock {\em Journal of Classification}, 24(2):155--181.

\bibitem[Gatto et~al., 2014a]{Gatto:2014b}
Gatto, L., Breckels, L.~M., Burger, T., Nightingale, D.~J., Groen, A.~J.,
  Campbell, C., Mulvey, C.~M., Christoforou, A., Ferro, M., and Lilley, K.~S.
  (2014a).
\newblock A foundation for reliable spatial proteomics data analysis.
\newblock {\em Molecular \& Cellular Proteomics}, pages mcp--M113.

\bibitem[Gatto et~al., 2019]{Gatto:2019}
Gatto, L., Breckels, L.~M., and Lilley, K.~S. (2019).
\newblock Assessing sub-cellular resolution in spatial proteomics experiments.
\newblock {\em Current Opinion in Chemical Biology}, 48:123--149.

\bibitem[Gatto et~al., 2014b]{pRoloc:2014}
Gatto, L., Breckels, L.~M., Wieczorek, S., Burger, T., and Lilley, K.~S.
  (2014b).
\newblock Mass-spectrometry based spatial proteomics data analysis using proloc
  and prolocdata.
\newblock {\em Bioinformatics}.

\bibitem[Gatto et~al., 2018]{pRolocdata}
Gatto, L., Crook, O.~M., and Breckels, L.~M. (2018).
\newblock {\em pRolocdata: Data accompanying the pRoloc package}.
\newblock R package version 1.19.1.

\bibitem[Gatto and Lilley, 2012]{MSnbase:2012}
Gatto, L. and Lilley, K. (2012).
\newblock Msnbase - an r/bioconductor package for isobaric tagged mass
  spectrometry data visualization, processing and quantitation.
\newblock {\em Bioinformatics}, 28:288--289.

\bibitem[Gatto et~al., 2010]{Gatto:2010}
Gatto, L., Vizca{\'\i}no, J.~A., Hermjakob, H., Huber, W., and Lilley, K.~S.
  (2010).
\newblock Organelle proteomics experimental designs and analysis.
\newblock {\em Proteomics}, 10(22):3957--3969.

\bibitem[Geladaki et~al., 2019]{DC:2018}
Geladaki, A., Britovsek, N.~K., Breckels, L.~M., Smith, T. S. O. L.~V., Mulvey,
  C.~M., Crook, O.~M., Gatto, L., and Lilley, K.~S. (2019).
\newblock Combining lopit with differential ultracentrifugation for
  high-resolution spatial proteomics.
\newblock {\em Nature Communications}, 10:331.

\bibitem[Gelfand et~al., 2005]{Gelfand:2005}
Gelfand, A.~E., Kottas, A., and MacEachern, S.~N. (2005).
\newblock Bayesian nonparametric spatial modeling with dirichlet process
  mixing.
\newblock {\em Journal of the American Statistical Association},
  100(471):1021--1035.

\bibitem[Gelfand and Smith, 1990]{Gelfand:1990}
Gelfand, A.~E. and Smith, A.~F. (1990).
\newblock Sampling-based approaches to calculating marginal densities.
\newblock {\em Journal of the American statistical association},
  85(410):398--409.

\bibitem[Gelman et~al., 1995]{Gelman:1995}
Gelman, A., Carlin, J.~B., Stern, H.~S., and Rubin, D.~B. (1995).
\newblock {\em Bayesian Data Analysis}.
\newblock Chapman \& Hall, London.

\bibitem[Gelman et~al., 2006]{Gelman::2006}
Gelman, A. et~al. (2006).
\newblock Prior distributions for variance parameters in hierarchical models
  (comment on article by browne and draper).
\newblock {\em Bayesian analysis}, 1(3):515--534.

\bibitem[Gelman and Rubin, 1992]{Gelman:1992}
Gelman, A. and Rubin, D.~B. (1992).
\newblock Inference from iterative simulation using multiple sequences.
\newblock {\em Statistical science}, pages 457--472.

\bibitem[Gentleman et~al., 2004]{Bioconductor::2004}
Gentleman, R.~C., Carey, V.~J., Bates, D.~M., Bolstad, B., Dettling, M.,
  Dudoit, S., Ellis, B., Gautier, L., Ge, Y., Gentry, J., et~al. (2004).
\newblock Bioconductor: open software development for computational biology and
  bioinformatics.
\newblock {\em Genome biology}, 5(10):R80.

\bibitem[Gibson, 2009]{Gibson:2009}
Gibson, T.~J. (2009).
\newblock Cell regulation: determined to signal discrete cooperation.
\newblock {\em Trends in biochemical sciences}, 34(10):471--482.

\bibitem[Girolami and Calderhead, 2011]{Girolami::2011}
Girolami, M. and Calderhead, B. (2011).
\newblock Riemann manifold langevin and hamiltonian monte carlo methods.
\newblock {\em Journal of the Royal Statistical Society: Series B (Statistical
  Methodology)}, 73(2):123--214.

\bibitem[Gneiting and Raftery, 2007]{Gneiting:2007}
Gneiting, T. and Raftery, A.~E. (2007).
\newblock Strictly proper scoring rules, prediction, and estimation.
\newblock {\em Journal of the American Statistical Association},
  102(477):359--378.

\bibitem[Hall et~al., 2009]{hall:2009}
Hall, S.~L., Hester, S., Griffin, J.~L., Lilley, K.~S., and Jackson, A.~P.
  (2009).
\newblock The organelle proteome of the dt40 lymphocyte cell line.
\newblock {\em Molecular \& Cellular Proteomics}, 8(6):1295--1305.

\bibitem[Heard et~al., 2015]{Heard:2015}
Heard, W., Sklen{\'a}{\v{r}}, J., Tome, D.~F., Robatzek, S., and Jones, A.~M.
  (2015).
\newblock Identification of regulatory and cargo proteins of endosomal and
  secretory pathways in arabidopsis thaliana by proteomic dissection.
\newblock {\em Molecular \& Cellular Proteomics}, 14(7):1796--1813.

\bibitem[Hennig, 2004]{Hennig::2004}
Hennig, C. (2004).
\newblock Breakdown points for maximum likelihood estimators of location-scale
  mixtures.
\newblock {\em Annals of Statistics}, pages 1313--1340.

\bibitem[Hensman et~al., 2013]{Hensman:2013}
Hensman, J., Lawrence, N.~D., and Rattray, M. (2013).
\newblock Hierarchical bayesian modelling of gene expression time series across
  irregularly sampled replicates and clusters.
\newblock {\em BMC bioinformatics}, 14(1):252.

\bibitem[Hirst et~al., 2018]{Hirst:2018}
Hirst, J., Itzhak, D.~N., Antrobus, R., Borner, G.~H., and Robinson, M.~S.
  (2018).
\newblock Role of the ap-5 adaptor protein complex in late endosome-to-golgi
  retrieval.
\newblock {\em PLoS biology}, 16(1):e2004411.

\bibitem[Horowitz, 1991]{Horowitz:1991}
Horowitz, A.~M. (1991).
\newblock A generalized guided monte carlo algorithm.
\newblock {\em Physics Letters B}, 268(2):247--252.

\bibitem[Huber et~al., 2015]{Huber::2015}
Huber, W., Carey, V.~J., Gentleman, R., Anders, S., Carlson, M., Carvalho,
  B.~S., Bravo, H.~C., Davis, S., Gatto, L., Girke, T., et~al. (2015).
\newblock Orchestrating high-throughput genomic analysis with bioconductor.
\newblock {\em Nature methods}, 12(2):115.

\bibitem[Itzhak et~al., 2017]{Itzhak:2017}
Itzhak, D.~N., Davies, C., Tyanova, S., Mishra, A., Williamson, J., Antrobus,
  R., Cox, J., Weekes, M.~P., and Borner, G.~H. (2017).
\newblock A mass spectrometry-based approach for mapping protein subcellular
  localization reveals the spatial proteome of mouse primary neurons.
\newblock {\em Cell reports}, 20(11):2706--2718.

\bibitem[Itzhak et~al., 2016]{Itzhak:2016}
Itzhak, D.~N., Tyanova, S., Cox, J., and Borner, G.~H. (2016).
\newblock Global, quantitative and dynamic mapping of protein subcellular
  localization.
\newblock {\em Elife}, 5:e16950.

\bibitem[Jadot et~al., 2017]{Jadot:2017}
Jadot, M., Boonen, M., Thirion, J., Wang, N., Xing, J., Zhao, C., Tannous, A.,
  Qian, M., Zheng, H., Everett, J.~K., et~al. (2017).
\newblock Accounting for protein subcellular localization: A compartmental map
  of the rat liver proteome.
\newblock {\em Molecular \& Cellular Proteomics}, 16(2):194--212.

\bibitem[Kalaitzis and Lawrence, 2011]{Kalaitzis:2011}
Kalaitzis, A.~A. and Lawrence, N.~D. (2011).
\newblock A simple approach to ranking differentially expressed gene expression
  time courses through gaussian process regression.
\newblock {\em BMC bioinformatics}, 12(1):180.

\bibitem[Kau et~al., 2004]{Kau:2004}
Kau, T.~R., Way, J.~C., and Silver, P.~A. (2004).
\newblock Nuclear transport and cancer: from mechanism to intervention.
\newblock {\em Nature Reviews Cancer}, 4(2):106--117.

\bibitem[Kirk et~al., 2015]{Kirk:2015}
Kirk, P., Babtie, A., and Stumpf, M. (2015).
\newblock Systems biology (un) certainties.
\newblock {\em Science}, 350(6259):386--388.

\bibitem[Kirk et~al., 2012]{Kirk:2012}
Kirk, P., Griffin, J.~E., Savage, R.~S., Ghahramani, Z., and Wild, D.~L.
  (2012).
\newblock Bayesian correlated clustering to integrate multiple datasets.
\newblock {\em Bioinformatics}, 28(24):3290--3297.

\bibitem[Kirk and Stumpf, 2009]{Kirk:2009}
Kirk, P.~D. and Stumpf, M.~P. (2009).
\newblock Gaussian process regression bootstrapping: exploring the effects of
  uncertainty in time course data.
\newblock {\em Bioinformatics}, 25(10):1300--1306.

\bibitem[Latorre et~al., 2005]{Latorre:2005}
Latorre, I.~J., Roh, M.~H., Frese, K.~K., Weiss, R.~S., Margolis, B., and
  Javier, R.~T. (2005).
\newblock Viral oncoprotein-induced mislocalization of select pdz proteins
  disrupts tight junctions and causes polarity defects in epithelial cells.
\newblock {\em Journal of cell science}, 118(18):4283--4293.

\bibitem[Laurila and Vihinen, 2009]{Laurila:2009}
Laurila, K. and Vihinen, M. (2009).
\newblock Prediction of disease-related mutations affecting protein
  localization.
\newblock {\em BMC genomics}, 10(1):122.

\bibitem[Lavine and West, 1992]{Lavine:1992}
Lavine, M. and West, M. (1992).
\newblock A bayesian method for classification and discrimination.
\newblock {\em Canadian Journal of Statistics}, 20(4):451--461.

\bibitem[Liu and Nocedal, 1989]{Liu:1989}
Liu, D.~C. and Nocedal, J. (1989).
\newblock On the limited memory bfgs method for large scale optimization.
\newblock {\em Mathematical programming}, 45(1-3):503--528.

\bibitem[Luheshi et~al., 2008]{Luheshi:2008}
Luheshi, L.~M., Crowther, D.~C., and Dobson, C.~M. (2008).
\newblock Protein misfolding and disease: from the test tube to the organism.
\newblock {\em Current opinion in chemical biology}, 12(1):25--31.

\bibitem[Lunn et~al., 2000]{Lunn:2000}
Lunn, D.~J., Thomas, A., Best, N., and Spiegelhalter, D. (2000).
\newblock Winbugs-a bayesian modelling framework: concepts, structure, and
  extensibility.
\newblock {\em Statistics and computing}, 10(4):325--337.

\bibitem[MacKay, 2003]{Mackay:2003}
MacKay, D.~J. (2003).
\newblock {\em Information theory, inference and learning algorithms}.
\newblock Cambridge university press.

\bibitem[Mendes et~al., 2017]{Mendes:2017}
Mendes, M., Peláez-García, A., López-Lucendo, M., Bartolomé, R.~A.,
  Calviño, E., Barderas, R., and Casal, J.~I. (2017).
\newblock Mapping the spatial proteome of metastatic cells in colorectal
  cancer.
\newblock {\em proteomics}, 17(19).
\newblock 1700094.

\bibitem[Metropolis et~al., 1953]{Metropolis:1953}
Metropolis, N., Rosenbluth, A.~W., Rosenbluth, M.~N., Teller, A.~H., and
  Teller, E. (1953).
\newblock Equation of state calculations by fast computing machines.
\newblock {\em The journal of chemical physics}, 21(6):1087--1092.

\bibitem[Mulvey et~al., 2017]{Mulvey:2017}
Mulvey, C.~M., Breckels, L.~M., Geladaki, A., Britov{\v{s}}ek, N.~K.,
  Nightingale, D.~J., Christoforou, A., Elzek, M., Deery, M.~J., Gatto, L., and
  Lilley, K.~S. (2017).
\newblock Using {hyperLOPIT} to perform high-resolution mapping of the spatial
  proteome.
\newblock {\em Nature Protocols}, 12(6):1110--1135.

\bibitem[Murphy, 2012]{Murphy:2012}
Murphy, K.~P. (2012).
\newblock {\em Machine learning: a probabilistic perspective}.
\newblock The MIT Press.

\bibitem[Neal, 1997]{Neal:1997}
Neal, R.~M. (1997).
\newblock Monte carlo implementation of gaussian process models for bayesian
  regression and classification.
\newblock {\em arXiv preprint physics/9701026}.

\bibitem[Neal et~al., 2011]{Neal:2011}
Neal, R.~M. et~al. (2011).
\newblock Mcmc using hamiltonian dynamics.
\newblock {\em Handbook of Markov Chain Monte Carlo}, 2(11).

\bibitem[Nightingale et~al., 2019]{Nightingale:2019}
Nightingale, D.~J., Geladaki, A., Breckels, L.~M., Oliver, S.~G., and Lilley,
  K.~S. (2019).
\newblock The subcellular organisation of saccharomyces cerevisiae.
\newblock {\em Current opinion in chemical biology}, 48:86--95.

\bibitem[Olkkonen and Ikonen, 2006]{Olkkonen:2006}
Olkkonen, V.~M. and Ikonen, E. (2006).
\newblock When intracellular logistics fails-genetic defects in membrane
  trafficking.
\newblock {\em Journal of cell science}, 119(24):5031--5045.

\bibitem[Orre et~al., 2019]{Orre:2019}
Orre, L.~M., Vesterlund, M., Pan, Y., Arslan, T., Zhu, Y., Woodbridge, A.~F.,
  Frings, O., Fredlund, E., and Lehtiö, J. (2019).
\newblock Subcellbarcode: Proteome-wide mapping of protein localization and
  relocalization.
\newblock {\em Molecular Cell}, 73(1):166 -- 182.e7.

\bibitem[Parsons et~al., 2014]{Parsons:2014}
Parsons, H., Fern{\'a}ndez-Ni{\~n}o, S., and Heazlewood, J. (2014).
\newblock Separation of the plant golgi apparatus and endoplasmic reticulum by
  free-flow electrophoresis.
\newblock {\em Methods in molecular biology (Clifton, NJ)}, 1072:527.

\bibitem[Plummer et~al., 2006]{CODA}
Plummer, M., Best, N., Cowles, K., and Vines, K. (2006).
\newblock Coda: Convergence diagnosis and output analysis for mcmc.
\newblock {\em R News}, 6(1):7--11.

\bibitem[{R Core Team}, 2017]{R}
{R Core Team} (2017).
\newblock {\em R: A Language and Environment for Statistical Computing}.
\newblock R Foundation for Statistical Computing, Vienna, Austria.

\bibitem[Rasmussen, 2004]{Rasmussen:2004}
Rasmussen, C.~E. (2004).
\newblock Gaussian processes in machine learning.
\newblock In {\em Advanced lectures on machine learning}, pages 63--71.
  Springer.

\bibitem[Rasmussen and Williams, 2006]{Rasmussen:2006}
Rasmussen, C.~E. and Williams, C.~K. (2006).
\newblock {\em Gaussian processes for machine learning}.
\newblock MIT Press.

\bibitem[Richardson and Green, 1997]{Richardson:1997}
Richardson, S. and Green, P.~J. (1997).
\newblock On bayesian analysis of mixtures with an unknown number of components
  (with discussion).
\newblock {\em Journal of the Royal Statistical Society: series B (statistical
  methodology)}, 59(4):731--792.

\bibitem[Rodriguez et~al., 2004]{Rodriguez:2004}
Rodriguez, J.~A., Au, W.~W., and Henderson, B.~R. (2004).
\newblock Cytoplasmic mislocalization of brca1 caused by cancer-associated
  mutations in the brct domain.
\newblock {\em Experimental cell research}, 293(1):14--21.

\bibitem[Sadowski et~al., 2006]{Sadowski:2006}
Sadowski, P.~G., Dunkley, T.~P., Shadforth, I.~P., Dupree, P., Bessant, C.,
  Griffin, J.~L., and Lilley, K.~S. (2006).
\newblock Quantitative proteomic approach to study subcellular localization of
  membrane proteins.
\newblock {\em Nature protocols}, 1(4):1778--1789.

\bibitem[Schuurman et~al., 2016]{Schuurman:2016}
Schuurman, N., Grasman, R., and Hamaker, E. (2016).
\newblock A comparison of inverse-wishart prior specifications for covariance
  matrices in multilevel autoregressive models.
\newblock {\em Multivariate Behavioral Research}, 51(2-3):185--206.

\bibitem[Shannon, 1948]{shannon:1948}
Shannon, C.~E. (1948).
\newblock A mathematical theory of communication.
\newblock {\em The Bell System Technical Journal}, 27(3):379--423.

\bibitem[Shin et~al., 2013]{Shin:2013}
Shin, S.~J., Smith, J.~A., Rezniczek, G.~A., Pan, S., Chen, R., Brentnall,
  T.~A., Wiche, G., and Kelly, K.~A. (2013).
\newblock Unexpected gain of function for the scaffolding protein plectin due
  to mislocalization in pancreatic cancer.
\newblock {\em Proceedings of the National Academy of Sciences},
  110(48):19414--19419.

\bibitem[Siljee et~al., 2018]{Siljee:2018}
Siljee, J.~E., Wang, Y., Bernard, A.~A., Ersoy, B.~A., Zhang, S., Marley, A.,
  Von~Zastrow, M., Reiter, J.~F., and Vaisse, C. (2018).
\newblock Subcellular localization of mc4r with adcy3 at neuronal primary cilia
  underlies a common pathway for genetic predisposition to obesity.
\newblock {\em Nat Genet}.

\bibitem[Steel and Fuentes, 2010]{Steel:2010}
Steel, M.~F. and Fuentes, M. (2010).
\newblock {\em Non-gaussian and nonparametric models for continuous spatial
  data}.
\newblock CRC press.

\bibitem[Strau{\ss} et~al., 2019]{Strauss:2019}
Strau{\ss}, M.~E., Kirk, P.~D., Reid, J.~E., and Wernisch, L. (2019).
\newblock {GPseudoClust: deconvolution of shared pseudo-trajectories at
  single-cell resolution}.
\newblock {\em bioRxiv}, page 567115.

\bibitem[Tan et~al., 2009]{Tan:2009}
Tan, D.~J., Dvinge, H., Christoforou, A., Bertone, P., Martinez~Arias, A., and
  Lilley, K.~S. (2009).
\newblock Mapping organelle proteins and protein complexes in drosophila
  melanogaster.
\newblock {\em Journal of proteome research}, 8(6):2667--2678.

\bibitem[Thul et~al., 2017]{Thul:2017}
Thul, P.~J., {\r A}kesson, L., Wiking, M., Mahdessian, D., Geladaki, A.,
  Ait~Blal, H., Alm, T., Asplund, A., Bj{\"o}rk, L., Breckels, L.~M.,
  B{\"a}ckstr{\"o}m, A., Danielsson, F., Fagerberg, L., Fall, J., Gatto, L.,
  Gnann, C., Hober, S., Hjelmare, M., Johansson, F., Lee, S., Lindskog, C.,
  Mulder, J., Mulvey, C.~M., Nilsson, P., Oksvold, P., Rockberg, J., Schutten,
  R., Schwenk, J.~M., Sivertsson, {\r A}., Sj{\"o}stedt, E., Skogs, M.,
  Stadler, C., Sullivan, D.~P., Tegel, H., Winsnes, C., Zhang, C., Zwahlen, M.,
  Mardinoglu, A., Pont{\'e}n, F., von Feilitzen, K., Lilley, K.~S., Uhl{\'e}n,
  M., and Lundberg, E. (2017).
\newblock A subcellular map of the human proteome.
\newblock {\em Science}.

\bibitem[Trench, 1964]{Trench:1964}
Trench, W.~F. (1964).
\newblock An algorithm for the inversion of finite toeplitz matrices.
\newblock {\em Journal of the Society for Industrial and Applied Mathematics},
  12(3):515--522.

\bibitem[Wang and Dunson, 2011]{Wang:2011}
Wang, L. and Dunson, D.~B. (2011).
\newblock Fast bayesian inference in dirichlet process mixture models.
\newblock {\em Journal of Computational and Graphical Statistics},
  20(1):196--216.

\bibitem[Williams and Rasmussen, 1996]{Williams::1996}
Williams, C.~K. and Rasmussen, C.~E. (1996).
\newblock Gaussian processes for regression.
\newblock In {\em Advances in neural information processing systems}, pages
  514--520.

\bibitem[Zhang et~al., 2005]{Zhang:2005}
Zhang, Y., Leithead, W.~E., and Leith, D.~J. (2005).
\newblock Time-series gaussian process regression based on toeplitz computation
  of o (n 2) operations and o (n)-level storage.
\newblock In {\em Decision and Control, 2005 and 2005 European Control
  Conference. CDC-ECC'05. 44th IEEE Conference on}, pages 3711--3716. IEEE.

\end{thebibliography}
\end{document}